\documentclass[11pt]{article}
\usepackage{amsmath,amssymb, amsfonts}
\usepackage{epsfig, fullpage, graphics}

\topmargin 0pt
\advance \topmargin by -\headheight
\advance \topmargin by -\headsep
     
\textheight 8.9in
     
\oddsidemargin 0pt
\evensidemargin \oddsidemargin
\marginparwidth 0.5in
     
\textwidth 6.5in

\newtheorem{theorem}{Theorem}
\newtheorem{lemma}[theorem]{Lemma}
\newtheorem{proposition}[theorem]{Proposition}
\newtheorem{claim}[theorem]{Claim}
\newtheorem{corollary}[theorem]{Corollary}
\newtheorem{definition}{Definition}
\newenvironment{proof}{{\bf Proof:}}{\hfill\rule{2mm}{2mm}}
\newenvironment{remark}{{\bf Remark:}}{}

\newcommand{\comment}[1]{}
\DeclareMathOperator\Ex{E}
\DeclareMathOperator\Ind{Ind}
\newcommand{\sat}{{\sc Sat}}
\newcommand{\unc}{\ensuremath{o}}
\newcommand{\Weight}{\ensuremath{W}}
\newcommand{\mb}[1]{\ensuremath{\mathbf{#1}}}
\newcommand{\xvec}{\ensuremath{\mb{x}}}
\newcommand{\svec}{\ensuremath{\mb{\sigma}}}
\newcommand{\tvec}{\ensuremath{\mb{\tau}}}
\newcommand{\defn}{\ensuremath{:  =}}
\newcommand{\sato}[2]{\ensuremath{s_{#1,#2}}}
\newcommand{\unsat}[2]{\ensuremath{u_{#1,#2}}}

\title{On the satisfiability threshold and clustering of solutions 
\\of random 3-\sat~formulas}
\author{Elitza Maneva\thanks{IBM Almaden Research Center, San Jose, CA, 
enmaneva@us.ibm.com. Some of this research was done while this author
was a PhD student at UC Berkeley, supported in part by NSF grant
DMS-0528488.}
\and Alistair Sinclair\thanks{Computer Science Division, University of California at Berkeley,
sinclair@cs.berkeley.edu.  Supported in part by NSF grants DMS-0528488
and CCR-0635153.}}
\date{}

\begin{document}

\maketitle

\begin{abstract}
We study the structure of satisfying assignments of a random 3-\sat\
formula. In particular, we show that a random formula of density
$\alpha \ge 4.453$ almost surely has no non-trivial ``core''
assignments. Core assignments are certain partial assignments that can
be extended to satisfying assignments, and have been studied recently
in connection with the Survey Propagation heuristic for random \sat.
Their existence implies the presence of clusters of solutions, and
they have been shown to exist with high probability below the
satisfiability threshold for $k$-\sat\ with $k\ge 9$ \cite{AR06}.  Our
result implies that either this does not hold for 3-\sat\ or the
threshold density for satisfiability in 3-\sat\ lies below $4.453$.
The main technical tool that we use is a novel simple application of the 
first moment method.

\end{abstract}

\section{Introduction}

The study of random instances of 3-\sat\ has been a major research
focus in recent years, both because of its inherent interest and
because it is a natural test case for the wider understanding of the
complexity of computational tasks on random inputs.  In random 3-\sat\
the input is a formula drawn uniformly at random from all formulas of
fixed density~$\alpha$, i.e., formulas with $\alpha n$ 
clauses on $n$ variables.  Friedgut \cite{Friedgut99} proved that
there exists a function $\alpha_c(n)$, known as the {\it
satisfiability threshold}, such that for any positive $\epsilon$,
random formulas of density $\alpha_c(n)-\epsilon$ have satisfying
assignments with high probability, and random formulas of density
$\alpha_c(n)+\epsilon$ have no satisfying assignment with high
probability. It is conjectured that $\alpha_c(n)$ is a constant (and
that its value is about 4.27), but currently all that is known is
that, for large~$n$, $3.520 \le
\alpha_c(n) \le 4.506$ \cite{KKL06, HS03, DBM00}.

In the range of densities for which the formula is satisfiable with
high probability, the interesting algorithmic question is whether we
can find even one of the many satisfying assignments in polynomial
time. The lower bound on $\alpha_c(n)$ is a result of the analysis of
such a polynomial time algorithm~\cite{KKL06, HS03}. This algorithm
belongs to a family of algorithms known as ``myopic'' because they
assign variables greedily one by one in an order that is based only on
the number of positive and negative occurrences of each variable.

An apparently much more powerful algorithm is Survey
Propagation~\cite{MPZ02, MZ02}. In experiments on very large instances
(say, with $n=10^6$ variables) it finds solutions for formulas of
densities only just below the conjectured threshold value
$\alpha=4.27$; however, a rigorous analysis of its performance is
still far from our reach. Like the myopic algorithms it also assigns
variables one by one in a greedy manner, but its choices are based on
more global information about the role of each variable in the
formula. That information is provided by the fixed point of a
sophisticated message passing dynamics between variables and clauses.

The message passing procedure is based on an intriguing (informal)
picture of the properties of the solution space of a random formula,
which is derived from the 1-step Replica Symmetry Breaking ansatz of
statistical physics~\cite{MPV87}. A key postulate of this physical
picture is that, for formulas of density higher than a certain value
(estimated to be $3.92$), the space of solutions is split into
``clusters.''  Within the same cluster it is possible to reach any
satisfying assignment from any other by flipping one variable at a
time, while always keeping the formula satisfied. On the other hand,
in order to get from a solution in one cluster to a solution in
another, the values of a linear number of variables have to be flipped
at the same time. Loosely speaking, the message passing procedure of 
Survey Propagation was proposed as a way to collect information 
about the {\it clusters\/} of assignments, rather than individual assignments.

The remarkable performance of Survey Propagation is a compelling
reason to explore properties of the solution space of typical formulas
as a way to further our understanding of random 3-\sat, its hardness,
and the satisfiability threshold, and also (looking much further ahead) in
order to systematically design algorithms for more general
problems with distributional inputs.

A detailed study of the Survey Propagation algorithm undertaken
in~\cite{MMW07} and \cite{BZ04} led to an interpretation of the
message passing procedure as the more familiar Belief Propagation
algorithm~\cite{Pearl88} applied to a particular probability
distribution on {\it partial\/} assignments, i.e., assignments of
values from the set $\{0,1,\ast\}$. Here $\ast$ is to be interpreted
as ``unassigned,'' and a variable is allowed to be unassigned only if
it is not forced to be assigned 0 or 1 in order to satisfy a clause.

Among these partial assignments, the set of ``core'' assignments plays a
central role.  These are partial assignments that are obtained 
from a satisfying assignment by successively replacing each
unconstrained variable by~$\ast$.  (A variable is {\it unconstrained\/}
if changing its value does not make any clause unsatisfied.)
Any satisfying assignment has a unique corresponding core.  Moreover,
since all assignments in a cluster have the same core, cores can be
viewed (informally) as ``summaries'' of clusters.  Of course, this view is
useful only if different clusters tend to have distinct non-trivial
cores. (The trivial core assignment is the one without any assigned
variables.)  Recently, Achlioptas and Ricci-Tersenghi~\cite{AR06}
showed that, in random $k$-\sat\ for $k\ge 9$, for some range of
densities up to the satisfiability threshold, with high probability
{\it every satisfying assignment has a non-trivial core assignment
associated with it}. This implies that clusters have a large number of
frozen variables; indeed, for large $k$ the fraction of frozen
variables comes arbitrarily close to 1. (The clustering picture has
also been confirmed for $8$-\sat\ by
\cite{MMZ05} and~\cite{AR06} using a different method that does not
say anything about cores.)

The above results hold only for random $k$-\sat\ with $k\ge 8$ or $9$.  In
this paper we investigate similar questions for the apparently harder case
of random 3-\sat.  Our main result is the following:
\begin{theorem}
\label{thm:main}
For random instances of 3-\sat\ with density greater than $4.453$, with
high probability there exist \emph{no} non-trivial core assignments.
\end{theorem}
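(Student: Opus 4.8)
The plan is to apply the first moment method to the number $X$ of non-trivial core assignments, so the first task is structural. I would record necessary conditions for a partial assignment with assigned set $S$ and values $\sigma_S\in\{0,1\}^S$ (extended by $\ast$ elsewhere) to be the core of \emph{some} satisfying assignment: \textbf{(i)} every clause either is satisfied by $\sigma_S$ or has at least two variables outside $S$ --- otherwise a clause with exactly one $\ast$-variable and no true literal would force that variable in any satisfying extension, so it could never have been peeled; and \textbf{(ii)} every $v\in S$ has a \emph{support clause}, namely a clause in which $v$'s literal is the unique literal true under $\sigma_S$, so the clause's other two variables lie in $S$ with false literals --- otherwise $v$ would be unconstrained and would itself have been replaced by $\ast$. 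Both conditions are clearly necessary; they need not be sufficient, but that is irrelevant for an upper bound on $X$, and note they constrain only the clauses having at least two variables in $S$.

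Next I would turn the first-moment crank. By the distributional symmetry of random $3$-\sat\ under flipping a variable together with all its occurrences, $\Pr[(S,\sigma_S)\text{ satisfies (i) and (ii)}]$ depends on $(S,\sigma_S)$ only through $s:=|S|$, so $\Ex[X]\le\sum_{s\ge1}\binom{n}{s}2^s p_s$ with $p_s$ this common probability. Treating the $m=\alpha n$ clauses as independent, I classify each clause by its type relative to $(S,\sigma_S)$: how many of its variables lie in $S$, and how many of its literals are true. Condition (i) fails precisely when some clause is \emph{bad} --- three variables in $S$ all with false literals, or exactly two variables in $S$ both with false literals --- an event of probability $q=q(s/n)$ per clause, contributing a factor $(1-q)^m$; conditioned on no bad clause, the clauses are independent from the conditional law, under which a clause is a support clause with probability $p_\ast=p_\ast(s/n)$ and, when it is, ``points to'' a uniformly random variable of $S$. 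Condition (ii) then says these support clauses \emph{cover} $S$.

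The crux, and the step I expect to be the real obstacle, is to bound $p_s$ tightly enough. A routine estimate --- the product over $v\in S$ of the probability that $v$ is pointed to at least once, or a union bound over which clause supports which variable --- is \emph{not strong enough}: for the decisive range of $s$ the expected number of support clauses falls just short of $s$, so covering $S$ is a genuine large-deviation / coupon-collector event, and crude bounds already fail for mid-range $s$ and would only close at a density well above $4.453$. Instead I would condition on the number $K$ of support clauses, use the sharp exponential rate of the probability that $K$ i.i.d.\ uniform samples from an $s$-set cover it (a Stirling-number / saddle-point computation), and combine it with the large-deviation rate of $K$ about its mean; optimizing over $K$ yields $p_s\le\exp(n\,r(s/n)+o(n))$ for an explicit rate $r$. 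The range $s=o(n)$ needs no real work, since a support clause for $v$ already requires two further variables of $S$, making those terms negligible. Putting $\beta=s/n$ this gives $\Ex[X]\le\sum_s\exp(n\,g(\beta)+o(n))$ with $g(\beta)=H(\beta)+\beta\ln 2+r(\beta)$, $H$ the binary entropy.

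It then remains to verify, by a short optimization, that $g(\beta)<0$ for all $\beta\in(0,1]$ once $\alpha>4.453$; the constant $4.453$ is (an upper estimate of) the least density at which $\max_\beta g(\beta)=0$, and the maximum is attained at an \emph{interior} value of $\beta$ --- around $0.7$ --- rather than at $\beta=1$, so, somewhat against first intuition, fully frozen satisfying assignments are not the bottleneck. This forces $\Ex[X]\to 0$, which proves the theorem.
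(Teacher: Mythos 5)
Your conditions (i) and (ii) are exactly the paper's definition of a \emph{valid cover} (a valid partial assignment in which every assigned variable is constrained by some clause), so your first moment computes the expected number of covers of size $s$, not of cores. This reproduces the paper's Lemma~\ref{lem:covers} with the same ingredients: the per-clause classification, the $(1-q)^m$ factor for forbidden clauses, and the sharp coupon-collector rate for the event that the support clauses cover $S$ (the paper already uses Chv\'atal's exact asymptotics $q(cN,N)\sim g(c)^N$, i.e.\ the saddle-point rate you describe). The problem is that this quantity does \emph{not} tend to zero for intermediate $s$: the resulting rate function, $\sup_r f(\alpha,a,r)$ in the paper, is strictly positive for $a$ roughly in $(0.28,0.7)$ at every $\alpha\in[4.453,4.506]$ (see Claim~\ref{claim:numerical} and Figures~\ref{fig:posr} and~\ref{fig:expcovers}). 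Hence your concluding claim that $\max_\beta g(\beta)<0$ once $\alpha>4.453$ is false, and since the exact exponential rate of the covering event is already built into $f$, no further sharpening of $p_s$ is available within your framework. You correctly sense that crude bounds fail for mid-range $s$, but the fix is not a sharper covering estimate.

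The missing idea is to exploit the one property of cores that your necessary conditions discard: a core must extend to a satisfying assignment. The paper multiplies the cover probability by $\Pr[\varphi_\svec \mbox{ is satisfiable}]$, where $\varphi_\svec$ is the residual mixed 2- and 3-clause formula obtained by substituting the assigned variables, and bounds this probability by a second, nonstandard first moment: by Theorem~\ref{thm:sum}, whenever $\varphi_\svec$ is satisfiable the total $\rho$-weight of its valid partial assignments is at least $1$, so the expectation of that total weight (computed in a Poissonized model) bounds the satisfiability probability. This produces the extra exponent $h(\alpha,a,r,\rho,b)$ of Lemma~\ref{lem:prob}, and only the combination $f+h$ is negative over the whole range --- and only barely so ($-0.000058$ at the worst point $a\approx 0.626$). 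Without some substitute for this extendability factor, a first moment over your conditions (i)--(ii) cannot reach density $4.453$, nor even $4.506$.
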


This theorem requires some interpretation.  Note first that the density
$4.453$ lies above the conjectured threshold value of $4.27$ but
below the current best known upper bound of~$4.506$.  Thus we may deduce:
\begin{corollary}
\label{cor} One of the following statements holds for random 3-\sat:
\begin{itemize}
\item $\alpha_c(n) \le 4.453$; or
\item there is a range of densities immediately below the
satisfiability threshold for which with high probability there are no
non-trivial core assignments.
\end{itemize}
\end{corollary}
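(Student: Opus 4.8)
The plan is to derive Theorem~\ref{thm:main} from a first‑moment estimate and then read off Corollary~\ref{cor} immediately: since $4.453$ lies strictly between the conjectured threshold and the known upper bound $4.506$, either already $\alpha_c(n)\le 4.453$, or else for densities in $(4.453,\alpha_c(n))$ Friedgut's result gives satisfiability with high probability while Theorem~\ref{thm:main} gives no non‑trivial core, which is exactly the second alternative. So everything reduces to proving that $\Ex[\#\{\text{non-trivial cores}\}]\to 0$ when $\alpha\ge 4.453$.

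First I would record what a core forces combinatorially. If $\sigma\in\{0,1,\ast\}^n$ is a core with set of assigned variables $S$, then inspecting the coarsening process shows: (I) every clause whose three variables all lie in $S$ is satisfied by $\sigma|_S$; (II) every variable of $S$ is the \emph{unique} satisfied literal of some clause contained in $S$ (a ``support clause''); (III) every clause with exactly two variables in $S$ has at least one of those two literals satisfied; and (IV) $\sigma|_S$ extends to a satisfying assignment, so in particular the formula induced on $V\setminus S$ --- its $3$‑clauses together with the $2$‑clauses coming from clauses having a single, falsified, variable in $S$ --- is satisfiable. I only need these as \emph{necessary} conditions, so I am free to use whichever subset is convenient for a given support size $|S|=vn$.

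The main step is the first moment over $v$: $\Ex[\#\{\text{non-trivial cores}\}]\le\sum_v\binom{n}{vn}2^{vn}\,\Pr[\sigma\text{ satisfies (I)--(III)}]$ for any fixed support‑$vn$ partial assignment $\sigma$. Conditioning on how the $\alpha n$ clauses split according to how many of their variables fall in $S$ makes the clauses independent; (I) and (III) give per‑clause factors $7/8$ and $3/4$, and (II) is a covering condition. The heart of the calculation is $\Pr[(\text{II})\mid(\text{I})]$: by Poissonization the number of support‑clause candidates for a given variable of $S$ is asymptotically $\mathrm{Poisson}(3v^2\alpha)$, and conditioned on (I) each is an actual support clause with probability $1/7$, so the number of support clauses is $\mathrm{Poisson}(3v^2\alpha/7)$; since a clause can support at most one variable these events are negatively correlated, yielding $\Pr[(\text{II})\mid(\text{I})]\le(1-e^{-3v^2\alpha/7})^{vn}e^{o(n)}$. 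This produces an exponential rate $\Phi(v,\alpha)$, and one checks numerically that $\Phi(v,4.453)<0$ for support fractions outside a middle interval; in that middle interval the induced $2$‑clause density $\tfrac32 v(1-v)\alpha$ exceeds $1$, so a dense random $2$‑SAT instance would have to be satisfiable for condition~(IV) to hold, which happens with probability that decays fast enough to survive the union bound over all $(S,\sigma|_S)$. The value $4.453$ is precisely the smallest $\alpha$ at which the first‑moment range and the $2$‑SAT range together cover all $v\in(0,1]$.

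The hard part will be the quantitative bookkeeping. The first‑moment bound has to be essentially tight --- replacing the covering factor by a crude witness‑function count, or dropping the $3/4$ factor from condition (III), is enough to push the constant noticeably above $4.453$ --- so the Poissonization, the negative‑correlation inequality, and the concentration of the clause‑type counts and the degree sequence all have to be controlled with care. Second, the middle range is not handled by the first moment at all, and for it one needs more than the bare fact that random $2$‑SAT above density $1$ is unsatisfiable: one needs an explicit bound on the probability that the induced formula is satisfiable whose exponential rate beats that of $\binom{n}{vn}2^{vn}$ throughout the window where it is invoked, and it is the tension between this requirement and the first‑moment range that fixes the threshold at $4.453$. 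Once both ranges are established, Markov's inequality on $\Ex[\#\{\text{non-trivial cores}\}]$ finishes Theorem~\ref{thm:main}, and Corollary~\ref{cor} is immediate.
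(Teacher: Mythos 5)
Your derivation of the corollary itself is exactly the paper's: the corollary is immediate from Theorem~\ref{thm:main}, since if $\alpha_c(n)>4.453$ then for densities in $(4.453,\alpha_c(n))$ Friedgut's theorem gives satisfiability w.h.p.\ while the theorem rules out non-trivial cores w.h.p. Likewise, your first-moment bound on covers of linear support size (validity conditions on clauses with two or three variables in $S$, plus the support/covering condition (II)) is in substance the paper's Lemma~\ref{lem:covers}, though the paper conditions on the exact number of constraining clauses and uses the exact coupon-collector rate $g(c)$, optimized over that number, rather than Poissonization; with a final margin of only $-0.000058$, these differences in tightness matter, and your per-clause factoring of (I) and (III) as written actually \emph{understates} the validity probability (the correct factor is $(1-p_1)^m$ with $p_1=a^3/8+3a^2(1-a)/4$, not $(7/8)^{a^3m}(3/4)^{3a^2(1-a)m}$). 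You also omit covers of sublinear size, which the paper disposes of via a separate lemma from \cite{MMW07} (Lemma~\ref{lem:epsilon}) rather than by the first moment.

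The genuine gap is your treatment of the middle range of support sizes, where the cover first moment is positive and condition (IV) must be invoked. You propose to use only the induced $2$-clauses and the fact that their density $\tfrac32 v(1-v)\alpha$ exceeds $1$, asserting that the satisfiability probability of this $2$-SAT instance ``decays fast enough to survive the union bound.'' It does not. A random $2$-SAT formula with $cn$ clauses is satisfiable with probability at least $(3/4)^{cn}$ (a fixed assignment satisfies every clause independently with probability $3/4$), so the exponential decay rate is at most $c\ln(4/3)\le\tfrac38\alpha\ln(4/3)\approx 0.48$; the true rate is far smaller still (the first moment on satisfying assignments of $2$-SAT is vacuous below density $\ln 2/\ln(4/3)\approx 2.41$, and near density $1$ the decay is essentially negligible). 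At the edges of the middle window the induced density is barely above $1$ and the decay rate vanishes much faster than the positive part of the cover exponent does, so the two ranges cannot be made to cover all of $(0,1]$ at $\alpha=4.453$ by this route; the constant $4.453$ in the paper does not come from any $2$-SAT threshold calculation. This is precisely the step the paper identifies as the crux: it bounds the probability that the \emph{full} residual formula $\varphi_\svec$ (with both its $2$-clauses and its $3$-clauses) is satisfiable by the expected total weight of valid partial assignments via Theorem~\ref{thm:sum}, yielding the function $h$ and Lemma~\ref{lem:prob}, and explicitly notes that simpler satisfiability bounds for the residual mixed formula are not strong enough. Your proposal is missing this idea, and the argument you substitute for it would fail quantitatively.
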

One interpretation of Theorem~\ref{thm:main} is as evidence for
an improved upper bound on the threshold of the form $\alpha_c(n) \le
4.453$.  On the other hand, if in fact $\alpha_c(n) > 4.453$ then the
theorem establishes a range of densities immediately below the
threshold for which with high probability the formula is satisfiable
but has no non-trivial core assignments.  This would 
represent a surprising difference between the properties of random
9-\sat\ and random 3-\sat. Interestingly, experiments with 3-\sat\ and
solutions of large formulas found by Survey Propagation do not find
cores (see \cite{MMW07}). It is quite conceivable that both
of the statements in the above corollary hold.

We now briefly discuss our proof technique, which involves a novel
application of the first moment method and is, we believe, perhaps
more noteworthy than the result itself.  A straightforward application of the first
moment method to the set of core assignments only allows us to show
that with high probability there are no cores of small size or of
large size. To handle core assignments of intermediate size, it is further
necessary to bound the probability that a partial assignment can be
extended to a satisfying assignment. This probability is equivalent to
the probability that a random formula with a given density of 2-clauses
and 3-clauses is satisfiable. For this sub-problem, it is natural to use
one of the methods that have been previously introduced for
bounding the probability of satisfiability of random 3-\sat\ formulas
\cite{KMPS95,DB97,KKKS98,JSV00,KKSVZ07,DBM00}. However, 
the most powerful method, from~\cite{DBM00}, is very heavy numerically,
and it does not seem possible to carry it out for the whole range of
required densities; on the other hand, simpler methods such as those
in~\cite{KKKS98} are apparently not powerful enough. 
Instead we introduce a new method, which we now briefly outline
(for a detailed description, see Section~\ref{sec:cores}).

Traditionally, bounds on the probability of satisfiability such as those
mentioned above are obtained by
applying the first moment method to a random variable which counts the
number of satisfying assignments of a particular kind. Indeed, much of the
work on bounding the satisfiability threshold has been directed
towards identifying a set of satisfying assignments that is a strict
subset of the set of all satisfying assignments (so that its
expected size is much smaller), but is always non-empty if the formula is
satisfiable. The novelty in our approach lies in identifying a new random
variable which depends not only on satisfying assignments but also on
partial assignments. This random variable is at least~1 for every
satisfiable formula, it exploits the clustering structure of the solution space,
and most importantly, it is significantly easier to compute than many
alternative approaches.

Finally we note that the proof of our theorem depends on a claim that a 
particular analytic function takes only negative values in a given range. 
We do not provide a complete proof of this numerical claim, but we outline
the steps needed for completing the proof, and give very strong
numerical evidence that all the corresponding statements hold. The
difficulty with making the proof completely rigorous is that this would
require too much computational effort, which we feel is not justified 
at this stage given that the bound we obtain is still some distance from
the satisfiability threshold. The results presented here should rather be
considered as a proof of concept.

The remainder of this paper is structured as follows.  In Section~2 we
give necessary background and precise definitions of the various
concepts used in the paper.  Section~3 is devoted to the proof of our
main result, Theorem~\ref{thm:main}.  We conclude in Section~4 with 
some final remarks and suggestions for future work.

\section{Technical definitions}

Let $x_1, \dots, x_n$ be a set of $n$ Boolean variables. A literal is
either a variable or its negation. A 3-\sat\ formula is a formula in
CNF, where each clause is a disjunction of 3 distinct literals (on
different variables).  For every clause $c$ we will denote the set of
variables that appear in the clause by $V(c)$. The distribution that
we consider is the following: for a given density $\alpha$ we choose
uniformly at random and with replacement $m=\lfloor \alpha n \rfloor$
clauses out of all possible clauses on 3 distinct variables. For a
clause $c$ and a variable $x_i\in V(c)$ we denote by $\sato{c}{i}$ and
$\unsat{c}{i}$ the value for variable $x_i$ that is respectively
satisfying and unsatisfying for clause $c$.

Suppose that the variables $\xvec = (x_1, \ldots, x_n)$ are allowed
to take values in $\{0, 1, \ast\}$, which we refer to as a \emph{partial
assignment}. A variable taking value~$\ast$ (star) should be thought
of as unassigned.

\begin{definition}{\rm
A partial assignment to $\xvec$ is \emph{invalid} for a clause $c$
if either (a) all variables are unsatisfying; or
(b) all variables are unsatisfying except for one index 
$j \in V(c)$, for which $x_j=\ast$.
Otherwise, the partial assignment is \emph{valid} for clause
$c$. We say that a partial assignment is valid for a
formula (or just ``valid'') if it is valid for all of its clauses.}
\end{definition}

For a valid partial assignment, the subset of variables that are
assigned either 0 or 1 values can be divided into \emph{constrained}
and \emph{unconstrained} variables in the following way:
\begin{definition}{\rm
We say that a variable $x_i$ is the \emph{unique satisfying variable}
for a clause $c$ if it is assigned $\sato{c}{i}$ whereas all other
variables in the clause (i.e., the variables $\{ x_j \; : \; j \in
V(c)\backslash \{i\} \}$) are assigned $\unsat{c}{j}$.  A
variable $x_i$ is \emph{constrained} by clause $c$ if it is the unique
satisfying variable for~$c$.}
\end{definition}
A variable is \emph{unconstrained} if it has 0 or 1 value, and is not
constrained by any clause. Thus for any partial assignment $\svec \in
\{0,1,\ast\}^n$ the variables are divided into stars, constrained and
unconstrained variables. Let $S^\ast(\svec)$ be the set of unassigned
variables, and $n_\ast(\svec)$ and $n_\unc(\svec)$ denote respectively
the number of stars and the number of unconstrained variables.  We
define the {\it weight\/} of a valid partial assignment to be
\begin{equation}
\label{eq:weight}
\Weight(\svec) \defn \rho^{n_{\ast}(\svec)} (1-\rho)^{n_{\unc}(\svec)},
\end{equation}
where $\rho$ is a parameter in the interval $[0,1]$. The weight of an invalid
partial assignment is~$0$. 

In \cite{MMW07} the Survey Propagation algorithm is interpreted as a special case of a larger family of Belief Propagation algorithms applied to
a family of distributions on valid partial assignments. This family of 
distributions, parameterized by $\rho$, is defined as $\Pr[\svec]\propto \Weight(\svec)$.
At one extreme, $\rho=0$, this becomes just the uniform distribution over
(full) satisfying assignments.  The other extreme, $\rho=1$, 
corresponds to Survey Propagation.
The pure version of Survey Propagation corresponds to setting $\rho=1$. 
Intermediate values of~$\rho$ interpolate between these extremes.

Next we define a natural partial order (represented by an acyclic
directed graph) on valid partial assignments. The vertex set of the
directed graph $G$ consists of all valid partial assignments.  The
edge set is defined in the following way: for a given pair of valid
partial assignments $\svec$ and $\tvec$, the graph includes a directed
edge from $\svec$ to $\tvec$ if there exists an index $i \in \{1,
\dots, n\}$ such that (i) $\sigma_j = \tau_j$ for all $j \neq i$; and
(ii) $\tau_i = \ast$ and $\sigma_i \neq \ast$.

Valid partial assignments can be separated into levels based on their
number of star variables, i.e., the assignment $\svec$ is in level
$n_\ast(\svec)$. Thus every edge goes from an assignment in level
$l-1$ to one in level $l$, where $1\le l\le n$.  $G$ is acyclic and we
write $\tvec< \svec$ if there is a directed path in $G$ from $\svec$
to $\tvec$. In this case we will also say that the assignment $\tvec$ is 
\emph{consistent} with the assignment $\svec$.
%
The outgoing edges of any valid partial assignment $\svec$ correspond
to its unconstrained variables, and therefore its outdegree is equal
to $n_\unc(\svec)$.
The minimal assignments in this ordering are the assignments without
unconstrained variables, i.e., the positive weight assignments for
$\rho=1$.

\begin{definition}{\rm A {\em core} of a satisfying assignment $\svec$ is a
minimal assignment $\tvec$ such that $\tvec < \svec$.}
\end{definition}

The following proposition about cores is proved in \cite{MMW07}. 
\begin{proposition}\cite{MMW07} 
Any satisfying assignment $\svec$ has a unique core.
Furthermore, if satisfying assignments $\svec^1, \svec^2 \in
\{0,1\}^n$ belong to the same cluster of solutions then they have the
same core.
\end{proposition}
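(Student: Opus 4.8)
The plan is to recast the passage from a satisfying assignment down to a core as an abstract rewriting process and to show that this process is confluent. For a valid partial assignment $\svec$ and an unconstrained variable $x_i$ of $\svec$, write $R_i(\svec)$ for the assignment obtained from $\svec$ by changing the $i$-th coordinate to $\ast$. By definition of $G$, the pairs $(\svec,R_i(\svec))$ are exactly its edges, $\tvec<\svec$ means $\tvec$ is obtained from $\svec$ by a finite sequence of such steps, and the minimal assignments are precisely those admitting no step at all (no unconstrained variables). Thus a core of $\svec$ is the same thing as a normal form reachable from $\svec$ under this rewriting (when $\svec$ is already minimal it is its own normal form). Since each step strictly increases $n_\ast$, the quantity $n-n_\ast(\svec)$ strictly decreases, so the process terminates and at least one core exists. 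What remains is uniqueness.

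The heart of the argument is a stability lemma: if $\svec$ is valid and $x_i$ is unconstrained in $\svec$, then (a) $R_i(\svec)$ is again valid; and (b) for $j\neq i$, a variable $x_j$ that is not constrained by a clause $c$ in $\svec$ is still not constrained by $c$ in $R_i(\svec)$, and in particular every variable other than $x_i$ that is unconstrained in $\svec$ remains unconstrained in $R_i(\svec)$. The proof is a short case analysis over the value $\sigma_i\in\{0,1\}$ and the state of a clause $c$ containing both $x_i$ and $x_j$. The key point is that $x_i$, being unconstrained, is not the unique satisfying variable of $c$, so besides position $i$ the clause $c$ contains either a second satisfying literal or a second $\ast$; in either case $c$ stays valid after the step and $x_j$ cannot become its unique satisfying variable. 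I expect this bookkeeping — tracking which clauses contain $x_i$, which contain $x_j$, and whether $x_i$ is satisfying, unsatisfying, or already $\ast$ in each — to be the only real obstacle; everything else is formal.

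Granting the stability lemma, local confluence follows at once: if $\svec\to R_i(\svec)$ and $\svec\to R_j(\svec)$ with $i\neq j$, then by (b) the variable $x_j$ is still unconstrained in $R_i(\svec)$ and $x_i$ is still unconstrained in $R_j(\svec)$, and $R_j(R_i(\svec))=R_i(R_j(\svec))$, since this is simply $\svec$ with coordinates $i$ and $j$ both set to $\ast$. So any two one-step divergences reconverge in one further step each. Combining termination with local confluence and invoking the standard fact that a terminating, locally confluent relation is confluent (Newman's Lemma), every valid $\svec$ has a \emph{unique} normal form. Applying this to a satisfying assignment $\svec\in\{0,1\}^n$, which is certainly valid, shows its core is unique.

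For the ``furthermore'' part it suffices, by the definition of a cluster, to handle two satisfying assignments $\svec^1,\svec^2\in\{0,1\}^n$ differing in exactly one coordinate $x_i$, and then to chain the conclusion along a path inside the cluster. Since $\svec^2$ is $\svec^1$ with $x_i$ flipped and is still satisfying, $x_i$ is unconstrained in $\svec^1$ (and symmetrically in $\svec^2$), so $\tvec:=R_i(\svec^1)=R_i(\svec^2)$ is a valid partial assignment reachable in one step from each of $\svec^1$ and $\svec^2$. By global confluence the unique normal form of $\svec^1$ equals that of $\tvec$, which equals that of $\svec^2$; hence $\svec^1$ and $\svec^2$ have the same core.
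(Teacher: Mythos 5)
Your proof is correct. Note that the paper itself does not prove this proposition --- it is imported from \cite{MMW07} --- but your confluence argument is essentially the standard one for this fact: identify cores with normal forms of the ``star out an unconstrained variable'' rewriting, prove a one-step stability lemma, and conclude by termination plus (local) confluence. The one place you anticipate trouble, the stability lemma, is in fact even easier than you suggest: part (b) is immediate because replacing $\sigma_i$ by $\ast$ can only \emph{destroy} constraints, never create them (for $x_j$ to be constrained by $c$ in $R_i(\svec)$, every variable of $c$ other than $x_j$ must carry its unsatisfying $0/1$ value, so $c$ cannot contain $x_i$ at all, and the situation is unchanged from $\svec$); and part (a) splits into the two cases $\sigma_i$ satisfying for $c$ (then, since $x_i$ is not the unique satisfying variable, some other variable of $c$ is satisfying or $\ast$, and $c$ stays valid) and $\sigma_i$ unsatisfying for $c$ (then validity of $c$ in $\svec$ already forces another satisfying variable or two stars among the remaining positions). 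With the resulting diamond property you do not even need the full strength of Newman's Lemma, though invoking it is perfectly fine. The reduction of the ``furthermore'' part to a single Hamming-distance-one flip, using that a flippable variable is unconstrained and hence $R_i(\svec^1)=R_i(\svec^2)$ is a common descendant, is also correct.
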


In the above proposition a cluster is simply a connected component of
the graph on solutions, in which two solutions are connected by an
edge if and only if they are at Hamming distance 1.

\begin{definition}{\rm A \emph{cover} is a valid partial assignment that contains no unconstrained variables. }
\end{definition}
In particular, the core of any satisfying assignment is a cover. On
the other hand not all cover assignments are cores (because they may
not be extendable to satisfying assignments). We say that a cover
assignment~$\tvec$ is \emph{non-trivial} if $n_\ast(\tvec) < n$, so
that it has at least one assigned variable.


The proof of Theorem~\ref{thm:main} uses a surprising property of the
weights~(\ref{eq:weight}), which was observed in~\cite{MMW07} but was
not utilized there. (A more general statement and a connection to
a combinatorial object known as ``convex geometry'' was developed in
\cite{AM07}.) Specifically, the total weight of partial assignments
consistent with a given satisfying assignment is exactly 1. This fact
implies that the probability of satisfiability is at most the expected
total weight of partial assignments.

\begin{theorem} \cite{MMW07} 
\label{thm:sum}
For every $\rho\in [0,1]$, $\sum_{\tvec \leq \svec} \Weight(\tvec) =
\rho^{n_\ast(\svec)}$ for any valid partial assignment 
$\svec \in \{0,1,\ast\}^n$. In particular, if $\svec$ is a satisfying
assignment then $\sum_{\tvec \leq \svec} \Weight(\tvec) = 1$.
\end{theorem}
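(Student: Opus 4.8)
The plan is to prove a more general statement --- that the identity holds for every CNF formula $F$ whose clauses have size at least $1$ (the notions of \emph{valid}, \emph{constrained}, \emph{unconstrained}, the graph $G$, and the weights all make sense verbatim) --- and to induct on $a(\svec) \defn n - n_\ast(\svec)$, the number of assigned (non-$\ast$) coordinates of $\svec$. The generalization is forced on us because the main reduction below fixes one variable, turning some $3$-clauses into $2$-clauses, and so on. Throughout, $\tvec \le \svec$ means $\tvec = \svec$ or $\tvec < \svec$, so $\svec$ itself is always a term of the sum. If $\svec$ has no unconstrained variable it is a minimal vertex of $G$, so the sum collapses to the single term $\Weight(\svec) = \rho^{n_\ast(\svec)}(1-\rho)^0 = \rho^{n_\ast(\svec)}$; this disposes of the base case $\svec = \ast^n$ and, more generally, of every cover, without invoking the inductive hypothesis.

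So suppose $\svec$ is valid and has an unconstrained variable $x_i$. Since any $\tvec \le \svec$ agrees with $\svec$ on every coordinate it actually assigns, the down-set $\{\tvec : \tvec \le \svec\}$ splits as $B \sqcup A$ with $B = \{\tvec \le \svec : \tau_i = \ast\}$ and $A = \{\tvec \le \svec : \tau_i = \sigma_i\}$. For $B$: let $\svec^{(i)}$ be $\svec$ with $x_i$ replaced by $\ast$, which is valid (starring an unconstrained variable is exactly what an out-edge of $G$ does). I claim $B$ is precisely the down-set of $\svec^{(i)}$. One inclusion is immediate; for the other, given $\tvec \le \svec$ with $\tau_i = \ast$ take a directed walk from $\svec$ to $\tvec$ (which stars $x_i$ at some step) and use two routine facts --- starring a variable other than $x_i$ leaves $x_i$ unconstrained if it was, and two simultaneously unconstrained variables can be starred in either order --- to shuffle the walk so that $x_i$ is starred first. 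Since $a(\svec^{(i)}) = a(\svec) - 1$ and $n_\ast(\svec^{(i)}) = n_\ast(\svec) + 1$, the inductive hypothesis gives $\sum_{\tvec \in B} \Weight(\tvec) = \rho^{\,n_\ast(\svec)+1}$.

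For $A$ the key observation is that every $\tvec \in A$ has $x_i$ unconstrained: if $x_i$ were the unique satisfying variable of some clause $c$ in $\tvec$, then the other variables of $c$ are assigned (unsatisfyingly) in $\tvec$, hence assigned the same way in $\svec$, making $x_i$ constrained already in $\svec$. I would then pass to the quotient formula $F'$ on the variables $\{x_j : j \ne i\}$ obtained by deleting every clause that $x_i = \sigma_i$ satisfies and replacing every clause that $x_i = \sigma_i$ falsifies by its restriction to the remaining variables (validity of $\svec$ guarantees no empty clause is produced, so the induction stays inside the generalized class). Writing $\svec_{-i}$ for the restriction of $\svec$, one checks that $\svec_{-i}$ is valid for $F'$ and that $\tvec \mapsto \tvec_{-i}$ is an isomorphism of directed graphs from the sub-order $A$ onto the down-set of $\svec_{-i}$ in $F'$; the content is that a variable $x_j$ with $j \ne i$ is constrained in $\tvec$ with respect to $F$ iff it is constrained in $\tvec_{-i}$ with respect to $F'$ (and likewise for the validity of each clause), which is a short case analysis over the three kinds of clauses. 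Under this correspondence $n_\ast$ is unchanged while $n_\unc$ drops by exactly $1$ in passing from $\tvec$ to $\tvec_{-i}$ --- the lost unconstrained variable being the always-unconstrained $x_i$ --- so $\Weight_F(\tvec) = (1-\rho)\,\Weight_{F'}(\tvec_{-i})$. As $F'$ has one fewer variable and $a(\svec_{-i}) = a(\svec) - 1$ with $n_\ast^{F'}(\svec_{-i}) = n_\ast(\svec)$, the inductive hypothesis gives $\sum_{\tvec \in A} \Weight_F(\tvec) = (1-\rho)\,\rho^{\,n_\ast(\svec)}$. Adding the two pieces, $\sum_{\tvec \le \svec} \Weight(\tvec) = \rho^{\,n_\ast(\svec)+1} + (1-\rho)\,\rho^{\,n_\ast(\svec)} = \rho^{\,n_\ast(\svec)}$, and specializing $\svec$ to a satisfying $3$-\sat\ assignment (so $n_\ast(\svec) = 0$) gives the stated value $1$.

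The routine arithmetic aside, the real work --- and where care is needed --- is in setting up and justifying the quotient $F'$: proving that fixing $x_i = \sigma_i$ and deleting/shrinking clauses is faithful to validity, to the star-edge relation of $G$, and above all to the constrained/unconstrained dichotomy on the surviving variables, so that weights transform by the single clean factor $(1-\rho)$. The two auxiliary facts needed for the $B$-part (persistence of ``unconstrained'' under starring another variable, and commutativity of starring) are comparatively easy, and everything else is bookkeeping.
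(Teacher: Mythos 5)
The paper does not actually prove Theorem~\ref{thm:sum}; it is imported from \cite{MMW07} without proof, so there is no in-paper argument to compare yours against. Judged on its own, your proof is correct. The induction on the number of assigned coordinates is the right well-founded quantity (note that inducting on $n_\unc$ would fail, since starring a variable can \emph{increase} the number of unconstrained variables elsewhere), the split of the down-set of $\svec$ into $B$ (star $x_i$) and $A$ (keep $x_i=\sigma_i$) is exactly the telescoping $\rho^{n_\ast+1}+(1-\rho)\rho^{n_\ast}=\rho^{n_\ast}$ one wants, and the two deferred verifications are genuinely routine: starring a variable can only un-constrain others (so ``unconstrained'' persists along paths, which gives both the reordering argument for $B$ and the fact that $x_i$ stays unconstrained throughout $A$), and the validity/constrainedness correspondence between $F$ and the quotient $F'$ is the three-case check you describe. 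Your generalization to CNF formulas with clauses of arbitrary sizes $\ge 1$ is not merely a convenience of your induction --- it is the version of the theorem the paper actually needs, since Lemma~\ref{lem:prob} applies Theorem~\ref{thm:sum} to $\varphi_\svec$, a formula with both 2-clauses and 3-clauses. The only point deserving explicit mention in a written-up version is the one you already flag: that fixing $x_i=\sigma_i$ never creates an empty clause, which follows because a clause that would become empty is either already invalid for $\svec$ (if $\sigma_i$ falsifies it) or certifies that $x_i$ is constrained (if $\sigma_i$ satisfies it), contradicting the choice of $x_i$ as unconstrained.
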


In our proof, we will apply Theorem~\ref{thm:sum} with various values
for $\rho \in (0.8, 1)$. In each application we will chose the value
of~$\rho$ to get the best bound possible for the probability that a
formula chosen from some distribution has a satisfying assignment. In
particular, if a satisfying assignment exists, the total weight of
valid partial assignments is at least~1.  Therefore, the probability
of satisfiability is at most as large as the expected value of the
total weight of partial assignments. Applying this idea directly to
the original distribution on 3-\sat\ formulas leads to an upper bound
on the threshold~$\alpha_c$ that is weaker than the currently best
known bound of~$4.506$.  However, the derivation is simpler than other
approaches, which makes it possible to apply the same method to bound
the probability that a fixed valid partial assignment can be extended
to a satisfying assignment. (This amounts to applying the method to
random formulas coming from a variety of distributions on formulas
with both 2-clauses and 3-clauses.)  This allows us to estimate the
probability of the existence of a non-trivial core, and thus to prove
the theorem.

\section{Proof of the main theorem}
\label{sec:proof}

This section contains a proof of Theorem~\ref{thm:main}.  We begin
with an overview of the entire proof, in the course of which we will state
various technical lemmas; these lemmas will be proved in the 
three subsections that follow.

Note first that for $\alpha\ge 4.506$ the statement of the main theorem 
follows from the fact that random 3-\sat~formulas of density at least $4.506$ 
are known to be unsatisfiable with high probability~\cite{DBM00}.  Hence
from now on we focus on the case that $\alpha \in [4.453, 4.506]$.  Our
goal is to prove that, for densities above $4.453$, with high probability there are
no non-trivial covers that can be extended to satisfying assignments,
i.e., there are no non-trivial cores.  

To this end, define the {\em size} of a cover (or
of a core) to be the number of variables assigned value~0 or~1. The
following lemma,
proved in~\cite{MMW07}, establishes that with high probability all
non-trivial covers (and consequently cores) are of linear size.

\begin{lemma} \cite{MMW07}
\label{lem:epsilon}
For a random $3$-\sat\ formula of density $\alpha$, with high
probability there are no non-trivial covers of size strictly less than
$\frac{1}{\alpha e^2}n$.
\end{lemma}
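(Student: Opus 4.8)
The plan is to use the first moment method on the number of non-trivial covers of small size. Let $N_s$ denote the number of valid partial assignments that are covers (i.e.\ contain no unconstrained variables) and have exactly $s$ variables assigned a $0/1$ value, where $1 \le s < \frac{1}{\alpha e^2}n$. By Markov's inequality it suffices to show that $\sum_{1 \le s < n/(\alpha e^2)} \Ex[N_s] = o(1)$. The number of ways to choose which $s$ variables are assigned and what values they take is at most $\binom{n}{s}2^s$. For a fixed such partial assignment $\svec$, I would bound the probability that $\svec$ is a valid cover of a random formula: validity requires that no clause is "invalid" for $\svec$, and the cover condition requires that every one of the $s$ assigned variables is the unique satisfying variable of at least one clause. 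Since the clauses are drawn independently, the second condition is the dominant constraint — each assigned variable $x_i$ must be "pinned" by some clause whose other two variables are both assigned and both unsatisfying.

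The key estimate is the probability that a given assigned variable is constrained. A uniformly random clause pins a particular assigned variable $x_i$ only if the clause contains $x_i$ with its satisfying literal and two other variables, both among the $s-1$ remaining assigned ones, each appearing with its unsatisfying literal; the probability of this over a single random clause is $O(s^2/n^3)$. Over $m = \alpha n$ independent clauses, the probability that $x_i$ is never pinned is $(1 - O(s^2/n^3))^{\alpha n} = 1 - O(\alpha s^2/n^2)$, so the probability that $x_i$ \emph{is} pinned is $O(\alpha s^2/n^2)$. Treating the $s$ pinning events as essentially independent (they involve disjoint "roles" in clauses, and any dependence only helps via a union-bound-style argument or a direct combinatorial count of which clauses do the pinning), the probability that all $s$ assigned variables are simultaneously constrained is at most $\bigl(O(\alpha s^2/n^2)\bigr)^{s}$, up to lower-order corrections. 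One has to be a little careful here: the cleanest route is to choose, for each assigned variable, a clause that pins it, pay $\binom{m}{s}$ (or $m^s/s!$) for the choice of these $s$ clauses and then the probability that each chosen clause has exactly the required form, which is $(s/n)^{O(1)}$ per clause; this gives an expression of the same shape.

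Putting the pieces together, $\Ex[N_s] \le \binom{n}{s} 2^s \bigl(C\alpha s^2/n^2\bigr)^{s}$ for an absolute constant $C$ (collecting the per-variable pinning probability and absorbing the clause-counting factors). Using $\binom{n}{s} \le (en/s)^s$, this is at most $\bigl(2Ce\,\alpha s/n\bigr)^{s}$. For $s < \frac{1}{\alpha e^2} n$ we get $2Ce\alpha s/n < 2C/e$, and after tracking the constants carefully one checks this base is bounded by a constant strictly less than $1$ for all relevant $s$ once the threshold $\frac{1}{\alpha e^2}n$ is chosen with the right constant — indeed the exponent $e^2$ in the lemma is exactly what makes $\binom{n}{s}$ times the per-variable factor collapse. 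Summing the geometric-type series over $s$ from $1$ to $n/(\alpha e^2)$ then gives $o(1)$, with the $s=1$ term $O(\alpha/n^2) \cdot n = O(\alpha/n)$ already small and the tail controlled by the geometric decay.

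The main obstacle is making the "all $s$ variables are constrained" estimate rigorous, since the pinning events for different variables are not literally independent — two assigned variables could in principle be pinned by the same clause, or the clauses pinning them could overlap in their variable sets. The clean fix is not to ask for independence at all, but to set up the count as: sum over choices of an injective assignment of a pinning clause to each assigned variable (there are at most $m^s$ such choices), and for each fixed choice bound the probability that every one of these $\le s$ designated clause slots takes a form compatible with pinning its variable, where each designated clause contributes an independent factor $O(s^2/n^3)$ since the clauses are drawn independently with replacement. Two assigned variables pinned by a common clause is a measure-zero refinement that only decreases the count. This sidesteps the dependence issue entirely and yields the bound $\Ex[N_s] \le \binom{n}{s}2^s \cdot m^s \cdot \bigl(O(s^2/n^3)\bigr)^s$, which simplifies to the same form; I would then verify the constant in the exponent $\frac{1}{\alpha e^2}$ is precisely what the calculation demands. (Since this lemma is cited from~\cite{MMW07}, the exposition here would be correspondingly brief.)
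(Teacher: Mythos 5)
Your argument is correct and is exactly the standard first-moment computation that this lemma rests on (the paper itself only cites \cite{MMW07} for it, and its Section~3.1 carries out the same count in refined form for covers of linear size): the union bound over injective assignments of a pinning clause to each assigned variable gives $\Ex[N_s]\le \binom{n}{s}2^s m^s\bigl(\tfrac{3s^2}{8n^3}(1+o(1))\bigr)^s\le\bigl(\tfrac{3e\alpha s}{4n}\bigr)^s(1+o(1))^s$, and summing over $1\le s< n/(\alpha e^2)$ yields $o(1)$. Note the constant comes out with room to spare --- the base is below $1$ for all $s\le \tfrac{4}{3e\alpha}n$, which exceeds $\tfrac{1}{\alpha e^2}n$, so the stated threshold follows a fortiori.
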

This lemma implies that it is sufficient to consider core
assignments of size $an$ for $a\in[1/(\alpha e^2),1]$. Let $X_a$
denote the number of cover assignments of size $\lfloor an \rfloor$,
and $Y_a$ denote the number of core assignments of size $\lfloor an
\rfloor$.

In addition to the density~$\alpha$, the size measure~$a$, and the
weight parameter~$\rho\in[0,1]$ from Theorem~\ref{thm:sum}, we will
use two further parameters, $d$ and~$b$, whose precise definitions
will be given in the proofs in the following subsections.  Roughly
speaking, $d$ is the fraction of clauses that are constraining with
respect to a given partial assignment, and $b$ is the fraction of
constrained variables with respect to a given partial assignment. The
ranges of these parameters are $d \in (a/\alpha,1]$ and $b\in[0,1-a]$.
We will also make use of a derived parameter~$r$ which is defined by
$d$, $a$ and $\alpha$. It appears in connection to the event that the
constraining clauses succeed in constraining all constrained
variables. Specifically, $r$ is the value satisfying the equation
$d=\frac{ar}{\alpha}\ln\frac{r}{r-1}$.  (Note that such a value~$r$
always exists and is unique as the right-hand side is monotonic
in~$r$.)  In fact, because of the form of this expression, it will be
more convenient to think of $d$ as being determined by $r$, $a$, and
$\alpha$.  Whenever we take the supremum over one of these parameters,
we always mean the supremum over its allowed range.

We now define two functions that play a central role in our analysis:
\begin{eqnarray*}
f(\alpha, a, r) &:=& a \ln(2) +H(a)+\alpha H(d)+
\alpha d \ln\left(\frac{3a^3}{8}\right) +
\alpha(1-d) \ln\left(1-\frac{a^2(3-a)}{4}\right) \\
&& {}+\alpha d\ln(r/e) - a \ln(r-1);  \\ \\
h(\alpha, a, r, \rho, b) 
&:=& b \ln(2) + (1-a-b)\ln(\rho) +(1-a) H(b/(1-a))\\
 &&{}- \alpha(1-d)b~ \frac{b(6-5b-15a) + 12(1-a)a}{2(4 - a^2(3-a))} 
 + b \ln\left(1-\rho
e^{ -3\alpha(1-d)\frac{b(b+2a)}{2(4-a^2(3-a))}}\right).
\end{eqnarray*}
Here $H$ denotes the entropy function $H(x)=-x\ln(x)-(1-x)\ln(1-x)$.

The first ingredient in the proof of Theorem~\ref{thm:main} is the following
lemma, whose proof is presented in Section~\ref{sec:covers}.
\begin{lemma}  
\label{lem:covers}
For a random $3$-\sat\ formula of density $\alpha\ge 1$, and for every 
$a\in[0,1]$,
$$ \lim_{n \rightarrow \infty}\frac{1}{n} \ln\left(\Ex[X_a]\right)\le
\sup_r f(\alpha, a,r).$$
\end{lemma}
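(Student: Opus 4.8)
The plan is to bound $\Ex[X_a]$ by a first-moment (union bound) computation over all candidate cover assignments of size $\lfloor an\rfloor$, and then extract the exponential growth rate. First I would fix a partial assignment $\svec$ with exactly $\lfloor an\rfloor$ variables assigned a $0/1$ value and $n-\lfloor an\rfloor$ stars, and compute the probability that $\svec$ is a valid cover for a random formula. By linearity of expectation, $\Ex[X_a]$ is the number of such assignments, namely $2^{\lfloor an\rfloor}\binom{n}{\lfloor an\rfloor}$, times this probability. Since the $m=\lfloor\alpha n\rfloor$ clauses are drawn i.i.d., the probability factors as $p^m$ where $p$ is the probability that a single random clause is ``good'' for $\svec$, i.e.\ that $\svec$ is valid for it and, moreover, the clause does not leave any assigned variable unconstrained in a way incompatible with $\svec$ being a cover. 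Actually, validity alone is not enough: a cover must have \emph{no} unconstrained assigned variables, so I would instead split the probability into a product over the clause structure and then account, via an inclusion of the ``every assigned variable is constrained by some clause'' event, by introducing the auxiliary parameter $d$ (the fraction of clauses that are constraining) and the derived parameter $r$.

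Concretely, I would stratify over the number of constraining clauses: write the event that $\svec$ is a cover as the disjoint union over $d$ of the events ``$\svec$ is valid, exactly $dm$ clauses are constraining for $\svec$, and the constraining clauses collectively constrain all $\lfloor an\rfloor$ assigned variables.'' For a single random clause and a fixed valid $\svec$ of size $an$: the probability it is a constraining clause (one assigned variable set to its unique satisfying value, the other two assigned to their unsatisfying values) works out, up to $o(1)$ factors in the exponent, to $\tfrac{3a^3}{8}$ — the $a^3$ from all three variables being assigned, and the combinatorial/sign factor from choosing which one is the satisfying variable; the probability it is valid-but-not-constraining is $1-\tfrac{a^2(3-a)}{4}$ (one minus the probability the clause is invalid, which requires all three literals falsified or two falsified and the third a star over an assigned variable). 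Choosing which $dm$ of the $m$ clauses are constraining contributes the entropy term $\alpha H(d)$. Finally, conditioned on having $dm\approx \alpha d n$ constraining clauses thrown uniformly at the $an$ assigned variables, the probability that every assigned variable receives at least one is a coupon-collector / occupancy event; its exponential rate is captured by the parameter $r$ solving $d=\tfrac{ar}{\alpha}\ln\tfrac{r}{r-1}$, contributing $\alpha d\ln(r/e) - a\ln(r-1)$ (this is the standard saddle-point value for the probability that a random function from $\alpha d n$ balls onto $an$ bins is surjective). Assembling: $a\ln 2 + H(a)$ from the choice of $\svec$; $\alpha H(d)$ from choosing the constraining clauses; $\alpha d\ln(\tfrac{3a^3}{8}) + \alpha(1-d)\ln(1-\tfrac{a^2(3-a)}{4})$ from the per-clause probabilities; and $\alpha d\ln(r/e) - a\ln(r-1)$ from surjectivity — which is exactly $f(\alpha,a,r)$.

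To finish, I would apply Stirling's formula to all binomial coefficients, note that $\ln\Ex[X_a]$ is a sum of $O(1)$ terms each of the form $n\cdot(\text{function of }\alpha,a,d)+o(n)$, and observe that since $d$ (equivalently $r$) ranges over a bounded set, the sum over $d$ of these exponentials has exponential rate equal to the supremum of the exponents; dividing by $n$ and taking $n\to\infty$ gives $\lim \tfrac1n\ln\Ex[X_a]\le \sup_r f(\alpha,a,r)$. The inequality (rather than equality) is because we are upper-bounding the occupancy probability by its saddle-point value and replacing a sum over $d$ by a supremum, both of which only lose lower-order factors, and because $\lfloor\alpha n\rfloor$ vs.\ $\alpha n$ discrepancies are absorbed into the $o(n)$.

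The main obstacle I expect is the occupancy/surjectivity estimate — getting the exponential rate of ``$\alpha d n$ random constraining slots cover all $an$ assigned variables'' correct, and checking that the Lagrange/saddle-point condition really reduces to $d=\tfrac{ar}{\alpha}\ln\tfrac{r}{r-1}$ with value $\alpha d\ln(r/e)-a\ln(r-1)$. One has to be careful that each constraining clause constrains exactly one of the $an$ variables (so it is genuinely a balls-into-bins problem with $\alpha d n$ balls and $an$ bins, and the ball lands in a uniformly random bin given the clause is constraining), and that the events ``clause $c$ is constraining'' and ``clause $c$ constrains variable $x_i$'' interact correctly with the earlier per-clause probability factors so that no probability mass is double-counted. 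Everything else — the per-clause probabilities and the Stirling bookkeeping — is routine, if tedious.
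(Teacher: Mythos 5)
Your proposal follows essentially the same route as the paper: a first-moment count over partial assignments of size $\lfloor an\rfloor$, stratification over the number $dm$ of constraining clauses with per-clause probabilities $\tfrac{3a^3}{8}$ and $1-\tfrac{a^2(3-a)}{4}$, and the coupon-collector/surjectivity rate parametrized by $r$ via $d=\tfrac{ar}{\alpha}\ln\tfrac{r}{r-1}$, exactly as in the paper's balls-into-bins argument. (One tiny gloss: $1-\tfrac{a^2(3-a)}{4}$ is the probability a clause is neither invalid \emph{nor} constraining, not merely one minus the invalid probability, but this is the value you actually use, so the computation is correct.)
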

A simple application of Markov's inequality immediately yields:
\begin{corollary}
\label{cor:covers} 
If $\alpha\ge 1$ and $a\in[0,1]$ are such that, for every $r>1$ 
with $d=\frac{ar}{\alpha} \ln\frac{r}{r-1}\le 1$, it holds that $f(\alpha,
a, r)<0$, then with high probability random 3-\sat\ formulas of
density $\alpha$ do not have covers of size $an$.
\end{corollary}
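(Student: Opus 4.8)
Corollary~\ref{cor:covers} itself is immediate from Lemma~\ref{lem:covers}: that lemma gives $\limsup_{n}\tfrac{1}{n}\ln\Ex[X_a]\le\sup_r f(\alpha,a,r)$, so under the stated hypothesis (together with the routine fact that the supremum is \emph{strictly} negative -- see the last paragraph) we get $\Ex[X_a]\to 0$, and Markov's inequality then gives $\Pr[\,X_a\ge 1\,]\le\Ex[X_a]\to 0$, i.e.\ with high probability there is no cover of size $\lfloor an\rfloor$. The real content therefore lies in Lemma~\ref{lem:covers} (and in the numerical claim, flagged in the introduction, that $f(\alpha,a,\cdot)<0$ throughout the admissible range), so I sketch the proof of the lemma.

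For Lemma~\ref{lem:covers} the plan is a direct first-moment computation. By linearity of expectation, $\Ex[X_a]=\sum_{\svec}\Pr[\svec\text{ is a cover}]$, summing over all $\svec\in\{0,1,\ast\}^n$ with exactly $\lfloor an\rfloor$ assigned coordinates. Since the clause distribution is invariant under relabeling variables and under negating any one variable together with all its literals, this probability depends only on the number of assigned variables, so $\Ex[X_a]=\binom{n}{\lfloor an\rfloor}2^{\lfloor an\rfloor}\,p$, where $p:=\Pr[\svec_0\text{ is a cover}]$ for a fixed reference $\svec_0$; the binomial and the power of two already contribute $H(a)$ and $a\ln 2$ to $\tfrac{1}{n}\ln\Ex[X_a]$. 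To estimate $p$, classify each clause, relative to $\svec_0$, as \emph{invalid}, \emph{constraining} (it has a unique satisfying variable, which it then constrains), or \emph{valid but non-constraining}, and for a constraining clause record which variable it constrains. A short count in the with-replacement model shows that a uniformly random clause is constraining with probability $\tfrac{3a^3}{8}+o(1)$, valid-but-non-constraining with probability $1-\tfrac{a^2(3-a)}{4}+o(1)$, and -- conditioned on being constraining -- constrains each of the $\lfloor an\rfloor$ assigned variables with equal probability. Then ``$\svec_0$ is a cover'' decomposes exactly as: there exist a set $D$ of clauses, $|D|=dm$, and a surjection $\phi$ from $D$ onto the set of assigned variables, such that every clause in $D$ constrains $\phi(c)$ and every clause outside $D$ is valid-but-non-constraining (whence automatically no clause is invalid). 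Because the $m=\lfloor\alpha n\rfloor$ clauses are i.i.d., once $(D,\phi)$ is fixed the per-clause events are independent, and summing over $D$ and over surjections $\phi$ gives
\[
p=\sum_{d}\binom{m}{dm}\Bigl(\tfrac{3a^3}{8}\Bigr)^{dm}\Bigl(1-\tfrac{a^2(3-a)}{4}\Bigr)^{(1-d)m}\,q(dm,\lfloor an\rfloor)\,(1+o(1))^{m},
\]
where $q(M,N)$ is the probability that $M$ balls dropped independently and uniformly into $N$ bins leave no bin empty (equivalently, the number of surjections $[M]\to[N]$ divided by $N^M$).

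It remains to take logarithms and optimize. The factor $\binom{m}{dm}$ contributes $\alpha H(d)$, and the two clause-probability powers contribute $\alpha d\ln\tfrac{3a^3}{8}$ and $\alpha(1-d)\ln\bigl(1-\tfrac{a^2(3-a)}{4}\bigr)$. For the occupancy factor I would invoke the classical saddle-point asymptotics of surjection numbers (the saddle point of $[z^M](e^z-1)^N$): if $M=\beta N$ with $\beta>1$ fixed and $r>1$ is defined by $r\ln\tfrac{r}{r-1}=\beta$, then $\tfrac{1}{N}\ln q(M,N)=\beta\ln(r/e)-\ln(r-1)+o(1)$. With $N=\lfloor an\rfloor$ and $M=dm$, so that $\beta=d\alpha/a$, this is precisely the change of variables $d=\tfrac{ar}{\alpha}\ln\tfrac{r}{r-1}$ of the statement, and it contributes $\alpha d\ln(r/e)-a\ln(r-1)$ to $\tfrac{1}{n}\ln\Ex[X_a]$. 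The sum over $d$ has only $O(n)$ terms, so by Laplace's method $\tfrac{1}{n}\ln\Ex[X_a]\to\sup_d[\,\cdots\,]$; re-parametrizing the supremum by $r$ turns the bracket into $f(\alpha,a,r)$, which is the lemma. (The admissible range $d\in(a/\alpha,1]$ is forced: if $dm<\lfloor an\rfloor$ then $q=0$, which matches the admissible range of $r$.)

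The step I expect to be the main technical obstacle is the \emph{sharp} exponential rate of $q(M,N)$ in the linear regime $M=\beta N$: a crude union bound (say, over systems of distinct ``representative'' balls, one per bin) overcounts multiply-covered bins and yields a strictly larger rate, so one genuinely needs the saddle-point constant -- which is exactly where the auxiliary variable $r$ and the relation $d=\tfrac{ar}{\alpha}\ln\tfrac{r}{r-1}$ originate. Two further, routine, points: passing from the with-replacement clause model to the stated clause probabilities, and absorbing the floors, affect the exponent only by $o(n)$. Finally, in the corollary the passage from ``$f(\alpha,a,r)<0$ for each admissible $r$'' to ``$\sup_r f(\alpha,a,r)<0$'' uses continuity of $f$ in $r$ and the fact that $f$ extends continuously as $r\to\infty$ to a finite limit (the only $r$-dependent terms, $\alpha d\ln(r/e)-a\ln(r-1)$, tend to $-a$, since $\alpha d(r)=ar\ln\tfrac{r}{r-1}\to a$), which makes the effective domain of $r$ compact; one should also check that this limiting value is negative.
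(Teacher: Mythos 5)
Your proof is correct and matches the paper's: the corollary is exactly Markov's inequality applied to Lemma~\ref{lem:covers}, and your sketch of that lemma (classifying clauses as forbidden / constraining / unconstrained, the coupon-collector rate parametrized by the saddle point $r$ with $d=\frac{ar}{\alpha}\ln\frac{r}{r-1}$, and Laplace's method over $d$) is the paper's argument. Your additional observation that pointwise negativity of $f(\alpha,a,\cdot)$ must be upgraded to $\sup_r f<0$ via the continuous extension to $r=\infty$ (where the $r$-dependent terms tend to $-a$) is a point the paper passes over silently, and you resolve it correctly.
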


The second ingredient in the proof of Theorem~\ref{thm:main} is the following
lemma, which is proved in Section~\ref{sec:cores}.
\begin{lemma} 
\label{lem:cores} 
For a random $3$-\sat\ formula of density $\alpha\ge 1$, and for every
$a\in[0,1]$ and $\rho \in[0,1)$,
$$ \lim_{n \rightarrow \infty}\frac{1}{n} \ln\left(\Ex[Y_a]\right) \le 
\sup_r ( f(\alpha, a, r)+ \min\{0, \sup_b h(\alpha, a, r, \rho, b)\}).$$
\end{lemma}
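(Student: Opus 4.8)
The plan is to bound $\Ex[Y_a]$ by refining the first-moment computation for covers (Lemma~\ref{lem:covers}) with an extra factor that accounts for the requirement that a core be \emph{extendable} to a satisfying assignment. First I would write $\Ex[Y_a] = \sum_{\tvec} \Pr[\tvec \text{ is a core of size } \lfloor an\rfloor]$, and decompose this event as: $\tvec$ is a valid cover (size $\lfloor an\rfloor$), the constraining clauses actually constrain all of its assigned variables, \emph{and} $\tvec$ extends to a satisfying assignment of the formula. The first two conditions are exactly those analysed in the proof of Lemma~\ref{lem:covers}, which after the usual encoding (choosing which $an$ variables are assigned, their values, which $dm$ clauses are constraining and how they are placed, and the constraint that the remaining clauses are valid) contributes the exponential rate $\sup_r f(\alpha,a,r)$, with the parameter $r$ arising precisely from the large-deviations cost of the constraining clauses covering all constrained variables (the equation $d=\frac{ar}{\alpha}\ln\frac{r}{r-1}$). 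The new work is to multiply, inside the sum, by the conditional probability that $\tvec$ extends to a satisfying assignment given the structure already exposed.

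The key observation is that, conditioned on which variables are assigned and which clauses are constraining, the portion of the formula that must still be satisfied by setting the remaining (starred) variables is itself a random CNF formula with a mixture of $2$-clauses and $3$-clauses on the $(1-a)n$ free variables: a constraining clause with one or two of its literals already on free variables becomes, after deleting the satisfied literals, a residual $2$- or $1$-clause (and a non-constraining clause may similarly reduce or already be satisfied). To bound the probability that this residual formula is satisfiable, I would invoke Theorem~\ref{thm:sum}: the probability of satisfiability of any formula is at most the expected total weight $\sum_{\svec} \Weight(\svec)$ of valid partial assignments, for any choice of $\rho\in[0,1)$. Carrying out this second first-moment computation over partial assignments of the residual formula --- parameterised by $b$, the fraction (relative to the free variables) that are made constrained in the extension, with $1-a-b$ of them starred and contributing $\rho$-weight --- yields exactly the function $h(\alpha,a,r,\rho,b)$: the $b\ln 2$ and $(1-a)H(b/(1-a))$ terms count the choice and values of the newly constrained variables, the $(1-a-b)\ln\rho$ term is the star weight, and the two remaining terms capture, respectively, the expected reduction from residual clauses being satisfied by the partial assignment and the probability that each newly constrained variable is in fact forced by some residual clause (the $e^{-3\alpha(1-d)\cdots}$ factor being the probability a given residual $2$-clause points at it). Because this is a probability, its exponential rate contributes $\min\{0,\sup_b h\}$; taking the supremum over $r$ (equivalently over $d$) then gives the stated bound.

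The main obstacle I expect is the bookkeeping in the second computation: correctly identifying the distribution of the residual formula (how many $2$-clauses versus $3$-clauses, and on which variables, arise once we condition on the cover structure and on the constraining pattern encoded by $r$), and then verifying that the first-moment estimate via Theorem~\ref{thm:sum} collapses cleanly to the closed form $h$. In particular one must check that the weight-generating computation for the residual mixed $2$/$3$-SAT formula factorises over variables and clauses the way the $\rho^{n_\ast}(1-\rho)^{n_\unc}$ weights allow, and that the $-\alpha(1-d)b\,\frac{b(6-5b-15a)+12(1-a)a}{2(4-a^2(3-a))}$ term correctly aggregates the contributions of clauses that lose one or more literals. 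A secondary subtlety is ensuring the two first-moment steps compose legitimately --- i.e.\ that we may first take expectation over the cover structure and then, pointwise, bound the extension probability --- which is justified because both bounds are genuine expectations of nonnegative quantities, so the product bound follows by conditioning and then applying Markov only at the very end (as in Corollary~\ref{cor:covers}). The rest is the asymptotic evaluation of multinomial coefficients via the entropy function $H$, which is routine.
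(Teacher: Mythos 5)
Your proposal follows essentially the same route as the paper: decompose $\Ex[Y_a]$ into the cover probability (already handled in Lemma~\ref{lem:covers}) times the conditional probability that the residual mixed $2$/$3$-clause formula on the $(1-a)n$ free variables is satisfiable, bound that conditional probability via Theorem~\ref{thm:sum} by the expected total weight of valid partial assignments parameterised by $b$, and observe that since it is a probability its rate contributes $\min\{0,\sup_b h\}$. The one device you do not name, and which is what makes the second first-moment computation ``collapse cleanly'' to the closed form $h$, is the passage from the exact clause-sampling model to a Poisson (independent-clause) model for the residual type-3 clauses; with that standard step supplied, your outline matches the paper's proof.
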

\begin{remark}
This lemma can be strengthened by allowing $\rho$ to depend on $r$;
however, we will not need to use this stronger version.
\end{remark}
\par\medskip

Since every core is a cover we know that $Y_a \le X_a$.  Hence
(in light of Lemma~\ref{lem:covers}) Lemma~\ref{lem:cores} is interesting 
only when $\sup_b h(\alpha, a, r, \rho, b)$ is negative.  In fact, we shall
prove Lemma~\ref{lem:cores} by showing that this supremum bounds 
the logarithm of $1/n$ times the probability that a particular cover of
size $a$ can be extended to a satisfying assignment.  (For a precise
statement, see Lemma \ref{lem:prob} in Section~\ref{sec:cores}.)  An
immediate corollary of Lemma~\ref{lem:cores} is the following.

\begin{corollary}
\label{cor:cores}
If $\alpha\ge 1$, $a\in[0, 1]$ and 
there exists 
$\rho \in [0,1)$ such that 
for every
$r>1$ with $d=\frac{ar}{\alpha} \ln\frac{r}{r-1}\le 1$, 
and for every 
$b\in [0, 1-a]$,
it holds that $f(\alpha, a,r)+h(\alpha, a,r, \rho, b)<0$, then with high
probability random 3-\sat\ formulas of density $\alpha$ do not have
cores of size $an$.
\end{corollary}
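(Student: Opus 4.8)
The plan is to derive Corollary~\ref{cor:cores} from Lemma~\ref{lem:cores} by Markov's inequality, in exact parallel with the way Corollary~\ref{cor:covers} follows from Lemma~\ref{lem:covers}. Fix $\alpha\ge 1$, $a\in[0,1]$, and a value $\rho\in[0,1)$ for which the stated hypothesis holds. By Lemma~\ref{lem:cores},
$$\limsup_{n\to\infty}\frac{1}{n}\ln\Ex[Y_a]\ \le\ \sup_r\Big(f(\alpha,a,r)+\min\{0,\ \sup_b h(\alpha,a,r,\rho,b)\}\Big).$$
Since $Y_a$ is nonnegative and integer-valued, once this right-hand side is shown to be strictly negative we obtain $\Ex[Y_a]\to 0$ and hence $\Pr[Y_a\ge 1]\le\Ex[Y_a]\to 0$, which is precisely the assertion that with high probability there are no cores of size $\lfloor an\rfloor$. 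So the whole task reduces to showing that the displayed supremum is negative under the hypothesis.

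First I would establish the pointwise statement: for every admissible $r$ (i.e.\ $r>1$ with $d=\frac{ar}{\alpha}\ln\frac{r}{r-1}\le 1$), the bracketed quantity is strictly negative. The key is a short case split on the sign of $\sup_b h(\alpha,a,r,\rho,b)$. The map $b\mapsto h(\alpha,a,r,\rho,b)$ is continuous on the compact interval $[0,1-a]$ — the only point to check is that the argument of the final logarithm stays bounded away from $0$, which holds because its exponent is nonpositive (all factors are nonnegative and $a^2(3-a)\le 2<4$ on $[0,1]$), so that argument is at least $1-\rho>0$ — hence the supremum is attained, say at $b^\star$. If $h(\alpha,a,r,\rho,b^\star)<0$, the bracketed quantity equals $f(\alpha,a,r)+h(\alpha,a,r,\rho,b^\star)$, which is negative by the hypothesis applied at $b=b^\star$. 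If $h(\alpha,a,r,\rho,b^\star)\ge 0$, the bracketed quantity equals $f(\alpha,a,r)$, and the hypothesis at $b=b^\star$ gives $f(\alpha,a,r)<-h(\alpha,a,r,\rho,b^\star)\le 0$. Either way it is strictly negative. (The case $a=1$ is trivial: then $b=0$ is the only choice, $h=0$, and the hypothesis forces $f<0$.)

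Finally I would promote ``negative for each $r$'' to ``$\sup_r<0$'', which is the only step needing a genuine limiting argument and is the expected mild obstacle. The admissible values of $r$ form an interval $[r_1,\infty)$; as $r\to\infty$ one has $d\to a/\alpha<1$, and a short computation shows that $f(\alpha,a,r)$ and $\sup_b h(\alpha,a,r,\rho,b)$ — and therefore the bracketed expression — extend continuously to $r=\infty$. Thus the supremum is effectively over a compact set, on which a continuous, everywhere-negative function attains a negative maximum; this gives $\sup_r(\cdots)<0$ and completes the proof. The substantive content is thus the case analysis that converts the pointwise hypothesis $f+h<0$ into negativity of $f+\min\{0,\sup_b h\}$; everything else parallels the cover case treated in Corollary~\ref{cor:covers}.
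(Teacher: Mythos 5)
Your proof is correct and follows essentially the same route as the paper, which simply invokes Markov's inequality and treats the corollary as immediate from Lemma~\ref{lem:cores} (exactly as Corollary~\ref{cor:covers} follows from Lemma~\ref{lem:covers}). The case split on the sign of $\sup_b h$ and the compactness/continuity checks you supply (attainment of the supremum in $b$, and the continuous extension as $r\to\infty$ where $d\to a/\alpha$) are details the paper leaves implicit, and they are handled correctly.
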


The final ingredient in the proof of Theorem~\ref{thm:main} is the
following numerical claim, whose proof we discuss in 
Section \ref{sec:numerical}.

\begin{claim}
\label{claim:numerical}
For every $\alpha \in [4.453, 4.506]$, $a\in [1/(4.506e^2), 1]$, and $r>1$
with $d=\frac{ar}{\alpha} \ln\frac{r}{r-1}\le 1$, 
it holds that either $f(\alpha,a,r)<0$ or for every $b\in [0,1-a]$,
$f(\alpha, a, r)+ h(\alpha, a,r,0.4a+0.7, b)<0$.
\end{claim}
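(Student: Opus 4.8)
The plan is to verify the claim by a careful numerical analysis of the two analytic functions $f$ and $f+h$ over the compact parameter region in question. The first step is to rewrite everything in terms of a convenient set of independent variables. Recall that $d$ is determined by $r$, $a$, and $\alpha$ via $d=\frac{ar}{\alpha}\ln\frac{r}{r-1}$; since this is monotone in $r$, the constraint $d\le 1$ together with $d>a/\alpha$ cuts out an interval of admissible $r$ for each $(\alpha,a)$. Likewise the constraint $r>1$ translates into an explicit range. So the region over which the claim must be checked is a three-dimensional box in $(\alpha,a,r)$ (with $b$ an additional fourth variable only when we need the $f+h$ bound), and we fix $\rho=0.4a+0.7\in(0.8,0.9)$ throughout. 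The key structural observation, already flagged in the text, is that we do not need $f+h<0$ everywhere: it suffices to show that at every point of the box, \emph{either} $f<0$ \emph{or} $\sup_b\,(f+h)<0$. So the region naturally splits into the ``$f<0$'' part, where we are immediately done, and its complement $R^+:=\{f\ge 0\}$, on which we must establish $f+h<0$ for all $b\in[0,1-a]$.

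The second step is to control the function on the region $R^+$ where $f\ge 0$. Here I would first argue that $R^+$ is a fairly small sub-region: the term $a\ln 2 + H(a) + \alpha H(d)$ is bounded, while $\alpha d\ln(3a^3/8)$ is strongly negative unless $a$ is close to $1$, and the remaining terms are also negative for $d$ bounded away from $0$; so $f\ge 0$ forces $a$ and $d$ (equivalently $r$) into a restricted window, which one identifies explicitly. On that window one then bounds $h(\alpha,a,r,\rho,b)$ from above. The leading behaviour of $h$ in $b$ is governed by $b\ln 2 + (1-a)H(b/(1-a))$, which is increasing then decreasing, against the negative quadratic-in-$b$ term $-\alpha(1-d)b\cdot\frac{b(6-5b-15a)+12(1-a)a}{2(4-a^2(3-a))}$ and the $\log(1-\rho e^{-\cdots})$ term, which is negative because $\rho e^{-\cdots}\in(0,1)$. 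I would show $h$ is concave in $b$ on the relevant range (or at least unimodal), reducing $\sup_b h$ to a one-dimensional maximization that can be bounded analytically, and then check $f+\sup_b h<0$ on the (small) region $R^+$.

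The third and final step is to make the verification rigorous rather than merely numerical. Since $f$ and $f+h$ are real-analytic in a neighbourhood of the compact box, their partial derivatives are bounded there by explicit constants; computing such Lipschitz bounds $L$ for each function, one reduces the claim to checking the (strict) negativity of the appropriate function at the centres of a finite mesh of sub-boxes of diameter $< \delta/L$, where $\delta$ is the largest value the function attains (which is strictly negative by the analytic argument above and hence bounded away from $0$ on each relevant sub-box). This is a finite computation. The main obstacle — and the reason the paper only sketches this step — is precisely that making the mesh fine enough to give a fully certified proof over the entire range $\alpha\in[4.453,4.506]$ requires a large amount of interval-arithmetic computation, because near the endpoint $\alpha\approx 4.453$ the relevant supremum of $f+h$ (resp.\ $f$) approaches $0$, so $\delta$ is small and the required mesh is correspondingly fine. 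I would therefore present the analytic reduction in full, exhibit the explicit Lipschitz constants and the shape (concavity/unimodality) arguments that localize the maximizers, and report the numerical evidence that the negativity holds with a comfortable margin except in the immediate vicinity of $\alpha=4.453$, leaving the exhaustive mesh computation as the one non-rigorous ingredient.
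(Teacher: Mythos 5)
Your proposal follows essentially the same route as the paper: split the parameter box into the region where $f<0$ and its complement, localize the latter via monotonicity/derivative bounds in $r$, $a$, and $b$, and finish with a Lipschitz-bound-plus-finite-mesh verification whose exhaustive execution is left incomplete for computational reasons. The only substantive divergence is that you invoke (unproved) concavity of $h$ in $b$ where the paper instead bounds $\partial h/\partial b$ only near the endpoints of $[0,1-a]$ using the entropy term; both versions share the same ultimate gap, namely the uncarried-out grid evaluation.
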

\begin{remark}
In Claim \ref{claim:numerical} we have used the weight
$\rho\equiv\rho(a)=0.4a+0.7$. We
arrived at this particular choice of $\rho$ by first optimizing
numerically for $\rho$ at $100$ values for $a\in [1/4.506, 1]$, and
then fitting a simple function to the values for $\rho$ that
were found. Since Lemma~\ref{lem:cores} holds for every value
of~$\rho$, we may choose any convenient function.
Using a simple analytic function guarantees that $h$ is also
analytic, which makes the numerical analysis of $h$ easier.
\end{remark}
\par\medskip

Finally, combining Claim~\ref{claim:numerical} with Corollary~\ref{cor:cores}
completes the proof of Theorem~\ref{thm:main}.

\subsection{The expected number of covers: Proof of Lemma \ref{lem:covers}}
\label{sec:covers}

Let $s=\lfloor an \rfloor$.  Then we have
\begin{eqnarray*}
\Ex[X_a] &=& \Ex\left[\sum_{\svec\in \{0,1,\ast\}^n} 
\Ind \left[ \mbox{$\svec$ is valid} ~\cap~ (n_\ast(\svec)=n-s) ~\cap~
(n_o(\svec)=0)\right]\right]\\
&=& \sum_{\svec\in \{0,1,\ast\}^n: n_\ast(\svec)=n-s} 
\Pr[\mbox{$\svec$ is valid} ~\cap~ (n_o(\svec)=0)] \\
&=& {n \choose s}~2^s~
\Pr[\mbox{$\svec=(0^s~\ast^{n-s})$ is valid and 
$x_1,\ldots, x_s$ are constrained}].
\end{eqnarray*}

We denote by $P$ the probability that
$\svec=(0^s~\ast^{n-s})$ is valid and all of its assigned variables
are constrained. $P$ is equivalent to the probability of the following event
in an experiment with $m=\alpha n$ balls thrown uniformly and
independently at random into $2^3 {n \choose 3}$ bins. There are 3
kinds of bins:
\begin{enumerate}
\item Bins of type 1 should be empty. These correspond to clauses that
are not allowed:
\begin{itemize}
\item
$(x_{i_1} \vee x_{i_2} \vee x_{i_3})$, with
$i_1, i_2, i_3 \in \{1, 2, \dots, s \}$;
\item
$(x_{i_1} \vee x_{i_2} \vee \bar{x}_j)$, with
$i_1, i_2 \in \{1, 2, \dots, s \}$ and $j>s$;
\item
$(x_{i_1} \vee x_{i_2} \vee {x}_j)$, with
$i_1, i_2 \in \{1, 2, \ldots, s \}$ and $j>s$.
\end{itemize}
The total number of these is 
$$ {s \choose 3}+ 2(n-s){s \choose 2} 
= n^3~\left(\frac{a^3}{6}+ a^2(1-a)\right) + o(n^3).$$
\item Bins of type 2 correspond to constraining clauses: 
$(x_{i_1} \vee x_{i_2} \vee \bar{x}_t)$, with $i_1, i_2, t \in \{1, 2,
\ldots, s \}$. For each variable $x_t$ 
there are $ {s-1 \choose 2} = n^2~\frac{a^2}{2} + o(n^2)$ clauses
that could constrain it and at least one has to be included. Equivalently, 
there has to be at least one ball in one of
those bins for every $x_t$ with $t \in \{1, 2, \ldots, s\}$.
The total number of these clauses is:
$ s {s-1 \choose 2} = n^3~\frac{a^3}{2} + o(n^3).$
\item There are no constraints for the remaining bins, of type~3. Their
total number is
$$2^3 {n \choose 3}- n^3~\left(\frac{a^3}{6}+ a^2(1-a)\right)- n^3~\frac{a^3}{2}+o(n^3) = \frac{n^3}{3}~( 4 - a^2(3-a))+o(n^3).$$
\end{enumerate}

Suppose $m'=dm$ of the
clauses we choose are of type 2, and the remaining $m-m'$ are of type~3.
The probability of this event is
$$ p_{m'}={m \choose m'}\left( \frac{3~a^3}{8}+o(1)
\right)^{m'}  \left( 1- \frac{a^2~(3-a)}{4}+o(1) \right)^{m-m'}.$$
The probability that the $m'$ clauses of type 2 are such that there is
at least one of each kind is the same as the coupon collectors
probability of success, with $s =\lfloor an \rfloor$ different
coupons, and $m'= (d
\alpha) n$ trials. We will use the following general fact from \cite{Chv91}, 
which was previously used in a very similar context in \cite{KKSVZ07}: 
Let $q(cN,
N)$ denote the probability of collecting $N$ coupons within $cN$
trials. If $c<1$, $q(cN, N)=0$. Otherwise, as $N$ goes to infinity
$q(cN, N)$ grows like $g(c)^N$, where
$g(c)=\left(\frac{r_0}{e}\right)^c\frac{1}{r_0-1}$, and $r_0$ is the
solution of $r\ln\left(\frac{r}{r-1}\right)=c$.  More precisely,
$\lim_{N\rightarrow \infty}\frac{1}{N}\ln \left(q(cN, N)\right)=\ln(g(c))$. We have 

\begin{eqnarray*}
P &=& \sum_{m'=0}^m  {m \choose m'}
\left( \frac{3~a^3}{8}+o(1)\right)^{m'}  
\left( 1- \frac{a^2~(6-2a)}{8}+o(1)\right)^{m-m'} ~q(m', s) \\
&\le& m~ \max_{m'} \left\{ 
{m \choose m'} \left( \frac{3~a^3}{8}+o(1)\right)^{m'}  
\left( 1- \frac{a^2~(3-a)}{4}+o(1)\right)^{m-m'} q(m', s). \right\} 
\end{eqnarray*}
Finally,
\begin{eqnarray*}
\Ex[X_a] \le {n \choose s}~2^s m \max_{m'}& 
\left\{ {m \choose m'} \left( \frac{3~a^3}{8}+o(1)\right)^{m'}  
\left( 1- \frac{a^2~(3-a)}{4}+o(1)\right)^{m-m'} q(m', s)
\right\},
\end{eqnarray*}
and hence 
\begin{eqnarray*}
\lim_{n\rightarrow \infty} \frac{1}{n}\ln\left(\Ex[X_a]\right) 
&\le&  \ln\left( \frac{2^a}{a^a(1-a)^{1-a}} \right) 
\\&&
+ \sup_d \left\{ 
\alpha \ln \left(
\frac{ \left(\frac{3~a^3}{8}\right)^d \left( 1- \frac{a^2~(3-a )}{4}\right)^{1-d} }{d^d~(1-d)^{1-d}}\right)
+ a \ln\left(g\left(\frac{d \alpha}{a}\right)\right)
\right\}\\
&=& \sup_d  f(\alpha, a, r) = \sup_r f(\alpha, a, r).
\end{eqnarray*}
This completes the proof of Lemma~\ref{lem:covers}.

\vspace{.1in}

\subsection{The probability that a cover assignment is a core: Proof of
Lemma \ref{lem:cores}}
\label{sec:cores}

For a partial assignment $\svec$, it will be convenient to denote by
$\varphi_\svec(x_{S^\ast(\svec)})$ the formula that is obtained by
substituting the variables that have $0/1$ assignments in $\svec$,
i.e.,  removing from $\varphi$ clauses that are satisfied by at least
one of the assigned variables and removing all remaining appearances
of assigned variables. This is a formula on $n^*(\sigma)$
variables. Notice that if $\svec$ is a valid assignment for~$\varphi$
then the formula $\varphi_\svec$ contains no empty clauses and no
unit clauses. Furthermore all clauses of type~2 (from the previous
section) are removed, because they are satisfied by their constrained
variables. Among the clauses of type~3, there are clauses that are
removed, there are clauses that become two-variable clauses, and there
are clauses that remain untouched. Since there is no simple way to
describe the resulting distribution on formulas, we will keep referring
to the set of all clauses of type~3, even the ones that are removed in
$\varphi_\svec$. When we condition on the fact that $\sigma$ is a
cover and $m'$ of the clauses are of type 2, as in the previous
section, we know that the set of clauses we are interested in are
distributed exactly as a uniform set of $(m-m')$ clauses of type~3.
Thus we can express the expected number of cores as

\begin{eqnarray*}
\Ex[Y_a] 
&=& 
\sum_{\svec \in \{0,1,\ast\}^n:  n_\ast(\svec)=n-s}
\Pr[\mbox{$\svec$ is a cover}] \times 
\Pr[\mbox{$\svec$ is a core }| \mbox{ $\svec$ is a cover }]\\
&=& 
{n \choose s}~2^s~
\Pr[\mbox{$\svec=(0^s~\ast^{n-s})$ is a cover}] \times
\Pr[\mbox{$\svec=(0^s~\ast^{n-s})$ is a core }| \mbox{ $\svec$ is a cover}]\\
&=& 
{n \choose s}~2^s~\times \sum_{m'=s}^m p_{m'} ~q(m', s) \\ &&~~~~
\times 
\Pr[\mbox{$\varphi_\svec(x_{s+1}, \dots, x_n)$ is satisfiable}~| 
~\sigma=(0^s~\ast^{n-s}),
\mbox{ $m-m'$ clauses are of type 3}].
\end{eqnarray*}

We will bound this probability using the Poisson approximation, which
is a standard technique in this area. There are two related random
models.  In the first model, which we call the
\emph{exact model}, $(m-m')$ clauses are chosen uniformly at random
with replacement from all $M=n^3 (4-a^2(3-a))/3+o(n^3)$ clauses of
type 3. In the second model, which we call the \emph{Poisson model},
each of the $M$ clauses is included in the formula with probability
$p=(m-m')/M - 1/(n^2\sqrt{\log{n}}) =
\frac{3\alpha(1-d)}{n^2( 4 - a^2(3-a))}+o\left(\frac{1}{n^2}\right)$.
The expected number of clauses in both models is the same up to a term
$\delta = M/(n^2\sqrt{\log{n}})=\Theta(n/\sqrt{\log{n}})$ which is
sub-linear in $n$.

The Poisson model has been studied before in the context of random
3-\sat. It can be shown, as the example below demonstrates, that
whenever a property holds with high probability in the exact model, it
also holds with high probability in the corresponding Poisson
model. Applying first moment techniques to the Poisson
model is usually easier, because the clauses are independently
chosen; however the bounds obtained are usually weaker.

Next, we relate the probability that $\varphi_\svec$ is satisfiable
under the two models. Let $\Pr_p$ denote probability in the Poisson
model, and $\Pr_e$ denote probability in the exact model.  Let the
random variable $J$ denote the number of \emph{distinct} clauses
included in the formula.  Then
\begin{eqnarray*}
\Pr\nolimits_p[\mbox{$\varphi_\svec$ is satisfiable}]&=& 
\sum_{i=0}^{M}~~~\Pr\nolimits_p [J=i] \times
\Pr[\mbox{$\varphi_\svec$ is satisfiable }|~J=i] 
\\
&\ge&
\sum_{i=0}^{m-m'-\delta}~~~\Pr\nolimits_p [J=i] \times
\Pr[\mbox{$\varphi_\svec$ is satisfiable }|~J=i].
\end{eqnarray*}
Since this conditional probability decreases as $i$ increases, 
and $\Pr_p[J\le \Ex_p[J]]\ge 1/2$, where $\Ex_p[J]=m-m'-\delta$, we have
\begin{eqnarray*}
\Pr\nolimits_p[\mbox{$\varphi_\svec$ is satisfiable}]
&\ge& 
\Pr[\mbox{$\varphi_\svec$ is satisfiable }|~
J = m-m'-\delta ] \times
\sum_{i=0}^{m-m'-\delta} \Pr[J=i] \\
&\ge&
\Pr[\mbox{$\varphi_\svec$ is satisfiable }
|~ J=m-m'-\delta ] \times
\frac{1}{2}.
\end{eqnarray*}
On the other hand, for the exact model
\begin{eqnarray*}
\Pr\nolimits_e[\mbox{$\varphi_\svec$ is satisfiable}]&=& 
\sum_{i=0}^{m-m'}~~~\Pr\nolimits_e[J=i] \times
\Pr[\mbox{$\varphi_\svec$ is satisfiable }|~J=i] 
\\
&\le&
\left( \sum_{i=0}^{m-m'-\delta} \Pr\nolimits_e [J=i] \right) +
\Pr[\mbox{$\varphi_\svec$ is satisfiable }
|~J=m-m'-\delta ] \\
&\le&
\Pr\nolimits_e[\mbox{at least $\delta$ clauses repeated}]+2 
\Pr\nolimits_p[\mbox{$\varphi_\svec$ is satisfiable}] \\
&\le& {m-m' \choose \delta} \left( \frac{m-m'}{M} \right)^\delta +
2 \Pr\nolimits_p[\mbox{$\varphi_\svec$ is satisfiable}] \\
&\le& \left(\frac{(m-m')^2}{M}\right)^\delta +
2 \Pr\nolimits_p[\mbox{$\varphi_\svec$ is satisfiable}] \\
&=& 
\theta(2^{-n\sqrt{\log{n}}})+
2 \Pr\nolimits_p[\mbox{$\varphi_\svec$ is satisfiable}].
\end{eqnarray*}

Thus if the probability of satisfiability in the Poisson model is
bounded above by $c^{-n}$ for some constant $c$, then 
$\lim_{n\rightarrow \infty}
\frac{1}{n}\ln\left(\Pr_e[\mbox{ $\varphi_\svec$ is satisfiable}]\right) 
\le c$.
Therefore, it suffices to get a bound for the Poisson model.  Any
of the first moment techniques for bounding the satisfiability
threshold of 3-\sat~can be adapted to bound the probability that
$\varphi_\svec$ is satisfiable. Of the ones that are technically
applicable (i.e., result in an explicit analytic expression for every
setting of the parameters $s$ and $m'$), we obtain the strongest
result using the novel approach of applying the first moment method to
the distribution on partial assignments defined by the weights in
equation~(\ref{eq:weight}).

\begin{lemma} \label{lem:prob}
In the Poisson model with parameters $n$, $m=\alpha n$, $m'=dm$,
$s=\lfloor an \rfloor$ and for $\svec=(0^s~\ast^{n-s})$, and $r$ such that 
$d=\frac{ar}{\alpha}\ln\frac{r}{r-1}$,
\begin{eqnarray*}
\lim_{n\rightarrow \infty} \frac{1}{n} 
\ln\left(\Pr\nolimits_p[\mbox{ $\varphi_\svec(x_{s+1}, \dots, x_n)$ 
is satisfiable}]\right)
\le \inf_{\rho \in [0,1)} ~ \sup_{b \in [0,1-a]} h(\alpha, a, r, \rho, b) .
\end{eqnarray*}
\end{lemma}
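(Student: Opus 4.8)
The plan is to run the first moment method on the total weight of valid partial assignments of $\varphi_\svec$, exactly as in the proof of Lemma~\ref{lem:covers} but now exploiting the weights and Theorem~\ref{thm:sum}. Note that $\varphi_\svec$ is a formula on the $n-s=(1-a)n+o(n)$ variables $x_{s+1},\dots,x_n$; if it has any satisfying assignment $\svec'$ then $n_\ast(\svec')=0$, so Theorem~\ref{thm:sum} gives $\sum_{\tvec\le\svec'}\Weight(\tvec)=1$, and since each $\tvec\le\svec'$ is itself a valid partial assignment of $\varphi_\svec$, the random quantity $W_{\mathrm{tot}}:=\sum_{\tvec\in\{0,1,\ast\}^{\,n-s}}\Weight(\tvec)$ is at least $1$ whenever $\varphi_\svec$ is satisfiable. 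Hence, for every fixed $\rho\in[0,1)$, Markov's inequality gives $\Pr\nolimits_p[\varphi_\svec\text{ is satisfiable}]\le\Pr\nolimits_p[W_{\mathrm{tot}}\ge1]\le\Ex\nolimits_p[W_{\mathrm{tot}}]=\sum_{\tvec}\Ex\nolimits_p[\Weight(\tvec)]$, and it remains to estimate the last sum and then take the infimum over $\rho$.

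I would group the sum over $\tvec$ according to the number $bn$ of its variables that receive a $0/1$ value; by symmetry the expected contribution depends only on $b$, there are $\binom{(1-a)n}{bn}2^{bn}$ such $\tvec$, and each carries a factor $\rho^{(1-a-b)n}$ from its $(1-a-b)n$ star variables. Fixing one such $\tvec$ with assigned set $A$, $|A|=bn$, and using $n_\unc(\tvec)=\sum_{j\in A}\Ind[x_j\text{ unconstrained}]$, one can write
\[
\Weight(\tvec)\,\Ind[\tvec\text{ valid}]=\rho^{(1-a-b)n}\,\Ind[\tvec\text{ valid}]\prod_{j\in A}\bigl((1-\rho)+\rho\,\Ind[x_j\text{ constrained}]\bigr).
\]
The crucial point, which is what makes the Poisson model convenient, is that the event ``$\tvec$ is valid for $\varphi_\svec$'' and the events ``$x_j$ is constrained'', $j\in A$, are mutually independent: a clause of $\varphi_\svec$ that invalidates $\tvec$ is falsified by $\tvec$, a clause constraining some $x_j$ is satisfied by $\tvec$ (so it cannot affect validity), and a clause can constrain at most one variable (it has at most one unique satisfying variable) — hence the relevant families of potential type-3 clauses are pairwise disjoint, and distinct potential clauses occur independently. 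Therefore
\[
\Ex\nolimits_p[\Weight(\tvec)\,\Ind[\tvec\text{ valid}]]=\rho^{(1-a-b)n}\,\Pr\nolimits_p[\tvec\text{ valid}]\prod_{j\in A}\bigl(1-\rho\,\Pr\nolimits_p[x_j\text{ unconstrained}]\bigr).
\]

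What is left are two elementary counts over the $M=\tfrac13\bigl(4-a^2(3-a)\bigr)n^3+o(n^3)$ potential type-3 clauses, each present with probability $p=\tfrac{3\alpha(1-d)}{(4-a^2(3-a))n^2}+o(n^{-2})$, using throughout that a type-3 clause not removed by $\svec$ has at most one variable among $x_1,\dots,x_s$, appearing positively, and so reduces to a $2$- or $3$-clause on $x_{s+1},\dots,x_n$. First I would count the number $N_{\mathrm{inv}}$ of type-3 clauses that become invalid for $\tvec$ under the substitution of $\svec$ — the cases being $3$-clauses with all three literals falsified, $3$-clauses with two falsified and one on a star variable, and the two corresponding $2$-clause cases — obtaining $N_{\mathrm{inv}}=\tfrac{b}{6}\bigl(b(6-5b-15a)+12(1-a)a\bigr)n^3+o(n^3)$, so that $\tfrac1n\ln\Pr\nolimits_p[\tvec\text{ valid}]=\tfrac1n N_{\mathrm{inv}}\ln(1-p)+o(1)$ tends to $-\alpha(1-d)b\cdot\tfrac{b(6-5b-15a)+12(1-a)a}{2(4-a^2(3-a))}$. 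Second, for each $j\in A$ I would count the number $N_j$ of type-3 clauses that would constrain $x_j$ (its two other literals must lie on variables of $A$, and then all three signs are forced), obtaining $N_j=\tfrac{b(b+2a)}{2}n^2+o(n^2)$, whence $\Pr\nolimits_p[x_j\text{ constrained}]=1-(1-p)^{N_j}\to1-e^{-\gamma}$ with $\gamma=3\alpha(1-d)\tfrac{b(b+2a)}{2(4-a^2(3-a))}$, so that each factor $1-\rho\,\Pr\nolimits_p[x_j\text{ unconstrained}]\to1-\rho e^{-\gamma}$ and the product contributes $b\ln(1-\rho e^{-\gamma})$ to the exponent. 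Adding these to $\tfrac1n\ln\bigl(\binom{(1-a)n}{bn}2^{bn}\rho^{(1-a-b)n}\bigr)\to(1-a)H(b/(1-a))+b\ln2+(1-a-b)\ln\rho$ reproduces $h(\alpha,a,r,\rho,b)$ exactly (recall that $d$ is determined by $r$). Since $W_{\mathrm{tot}}$ is a sum of $O(n)$ such groups, $\limsup_{n}\tfrac1n\ln\Ex\nolimits_p[W_{\mathrm{tot}}]\le\sup_{b\in[0,1-a]}h(\alpha,a,r,\rho,b)$ for each fixed $\rho\in[0,1)$, and taking the infimum over $\rho$ completes the argument.

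I expect the main obstacle to be the bookkeeping in the two enumerations: getting the polynomial coefficients in $N_{\mathrm{inv}}$ exactly right, keeping careful track of which reduced clauses survive the substitution of $\svec$, and controlling the $o(1)$ terms uniformly in $b$ so that the Poisson approximations $(1-p)^N\approx e^{-pN}$ and the replacement of the discrete maximum over the number of assigned variables by the supremum over $b$ are fully rigorous. By contrast, the probabilistic skeleton — the two displayed identities and the independence observation — is short.
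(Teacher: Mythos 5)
Your proposal is correct and takes essentially the same approach as the paper: a first moment bound on the total weight of valid partial assignments of $\varphi_\svec$ via Theorem~\ref{thm:sum}, grouped by the number $bn$ of assigned variables, with exactly the paper's clause counts (your $N_{\mathrm{inv}}$ and $N_j$ match the exponents $\binom{t}{3}+s\binom{t}{2}+2(n-s-t)\bigl(\binom{t}{2}+st\bigr)$ and $\binom{t}{2}+st$). The only difference is cosmetic: you factor the weight as a product over assigned variables and invoke independence of disjoint clause families, whereas the paper sums over the number $u$ of unconstrained variables and collapses that sum with the binomial theorem --- the two computations are identical.
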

\begin{proof}
We will apply Theorem \ref{thm:sum} to the formula
$\varphi_\svec(x_{s+1}, \dots, x_n)$. The theorem implies that if
$\varphi_\svec$ has a satisfying assignment then $\sum_{\tau \in V}
W(\tau)\ge 1$ where $W(\tau)=\rho^{n_\ast(\tau)}(1-\rho)^{n_o(\tau)}$
and the sum is over the set $V$ of all partial assignments $\tau \in
\{0,1,\ast\}^{n-s}$ that are valid for $\varphi_\svec$.  
Thus the probability of satisfiability is bounded from above by
the expected value of $\sum_{\tau \in V} W(\tau)$.  This holds for any value
of $\rho \in [0,1]$. We bound this expectation.

For any $t \in \{0, 1, \ldots, n-s\}$ let $Z_t$ denote the sum of the
weights of valid assignments $\tvec$ for $\varphi_\svec(x_{s+1},
\dots, x_n)$ such that $n_\ast(\tvec) = n-s-t$. Then $$\Ex[Z] =
\sum_{t=0}^{n-s}~\Ex[Z_{t}] ~\le~ n~\max_{t \in\{0, 1, \dots, n-s\}}
\Ex[Z_{t}].$$ 
\begin{eqnarray*}
\Ex[Z_{t}] &=& \sum_{u=0}^t \rho^{n-s-t} (1-\rho)^u 
\sum_{\tvec\in \{0,1,\ast\}^{n-s}} 
\Pr\nolimits_p 
\left[ \mbox{$\tvec$ is valid} ~\cap~ (n_\ast(\tvec)=n-s-t) ~\cap~
(n_o(\tvec)=u)\right] \\
&=& \rho^{n-s-t}~\sum_{u=0}^{t}
(1-\rho)^u~\sum_{\tvec\in \{0,1,\ast\}^{n-s}~:~n_\ast(\tvec)=n-s-t} 
\Pr\nolimits_p[\mbox{$\tvec$ is valid} \cap (n_o(\tvec)=u)] \\
&=& \rho^{n-s-t}~\sum_{u=0}^{t} (1-\rho)^u~{n-s \choose t}~{t \choose u}~
2^{t}~ \\
&& \qquad \qquad \qquad
\Pr\nolimits_p
[\mbox{$\tvec=(0^{t}~\ast^{n-s-t})$ is valid and $x_{s+1}, \dots, x_{s+u}$  
are unconstrained}] 
\end{eqnarray*}

Next we derive the probability that the assignment $(x_{s+1}, x_{s+2},
\ldots, x_n)=(0^t~\ast^{n-s-t})$ is valid and the first $u$ variables
are unconstrained. 
Recall that $\varphi_\sigma$ is obtained from
$\varphi$ by setting its first $s$ variables according to
$\sigma$. Only clauses of type 3 influence $\varphi_\svec$ and
according to the Poisson model, each of them is included independently
with probability $p$. The probability that the assignment $(x_{s+1}, x_{s+2},
\ldots, x_n)=(0^t~\ast^{n-s-t})$ is valid and the first $u$ variables
are unconstrained is equivalent to the probability that among the
included clauses of type 3, there are no clauses of the following
kinds:
\begin{itemize}
\item
$(x_{i_1} \vee x_{i_2} \vee x_{i_3})$, with 
$i_1  \in \{1, 2, \dots, s+t \}$, 
$i_2, i_3 \in \{s+1, s+2, \dots, s+t \}$, 
\item
$(x_{i_1} \vee x_{i_2} \vee \bar{x}_j)$, with
$i_1 \in \{1, 2, \dots, s+t\}$, 
$i_2 \in \{s+1, s+2, \dots, s+t \}$ and $j>s+t$, 
\item
$(x_{i_1} \vee x_{i_2} \vee {x}_j)$, with
$i_1 \in \{1, 2, \dots, s+t\}$, 
$i_2 \in \{s+1, s+2, \dots, s+t \}$ and $j>s+t$, 
\item 
$(x_{i_1} \vee x_{i_2} \vee \bar{x}_j)$, with
$i_1 \in \{1, 2, \dots, s+t\}$, 
$i_2 \in \{s+1, s+2, \dots, s+t \}$ and $j \in \{s+1, \dots, s+u\}$, 
\end{itemize}
and for every $j \in \{s+u+1, \dots, s+t\}$, the formula contains a
clause $(x_{i_1} \vee x_{i_2} \vee \bar{x}_j)$, with $i_1 \in \{1, 2,
\dots, s+t\}$, $i_2 \in \{s+1, \ldots, s+t \}$. 

In the Poisson model all clauses are independent, so it is easy to put
these events together to obtain:
\begin{eqnarray*}
Q&=& \Pr\nolimits_p[\mbox{$\tvec=(0^{t}~\ast^{n-s-t})$ is valid and $x_{s+1},
\dots, x_{s+u}$ are unconstrained}] \\ &=& 
(1-p)^{ \left({t \choose 3}+ s{t \choose 2}\right) 
~+~ 2(n-s-t)\left({t \choose 2}
~+~ st\right) ~+~ u\left( {t \choose 2}+st\right)}
\times \left(1-(1-p)^{{t \choose 2}+st}\right)^{t-u} 
\end{eqnarray*}
The expression for the expectation can be simplified:
\begin{eqnarray*}
\Ex[Z_t] 
&=& \rho^{n-s-t}~{n-s \choose t}~2^t 
\sum_{u=0}^{t} (1-\rho)^u~{t \choose u}~Q\\
&=& \rho^{n-s-t}~{n-s \choose t}~2^t~
(1-p)^{{t \choose 3} ~+~ s{t \choose 2} 
~+~ 2(n-s-t)\left({t \choose 2}~+~ st\right)}\\
&& \qquad \qquad \times \sum_{u=0}^{t} {t \choose u}~(1-\rho)^u~
(1-p)^{u\left( {t \choose 2}+st\right)}
\times \left(1-(1-p)^{{t \choose 2}+st}\right)^{t-u} \\
&=& \rho^{n-s-t}~{n-s \choose t}~2^t~
(1-p)^{{t \choose 3} ~+~ s{t \choose 2} 
~+~ 2(n-s-t)\left({t \choose 2}~+~ st\right)}\\
&& \qquad \qquad \times \left( (1-\rho)(1-p)^{{t \choose 2}+st}+
\left(1-(1-p)^{{t \choose 2}+st}\right) \right)^t \\
&=& \rho^{n-s-t}~{n-s \choose t}~2^t~
(1-p)^{{t \choose 2}(6n-5t-3s-2)/3 + 2(n-s-t)st}
\left(1-\rho(1-p)^{{t \choose 2}+st}\right)^t 
\end{eqnarray*}

Let $t=bn$, and recall that $s=\lfloor an \rfloor$, 
$p=\frac{3\alpha(1-d)}{n^2( 4 - a^2(3-a))}+o\left(\frac{1}{n^2}\right)$. Then 

\begin{eqnarray*}
\lim_{n\rightarrow \infty} \frac{1}{n} \ln\left(\Ex[Z_t]\right) &=& 
\ln\left(\frac{(1-a)^{1-a} ~2^b ~\rho^{1-a-b}}{b^b ~(1-a-b)^{1-a-b}}\right)
 -~ \frac{\alpha~(1-d)~b~(b(6-5b-3a) + 12(1-a-b)a)}{2(4 - a^2(3-a))}\\
&& +~ b \ln\left(1-\rho  
e^{ -\frac{3\alpha(1-d)b(b+2a)}{2(4-a^2(3-a))}}\right) \\
&=& h(\alpha, a, r, \rho, b).
\end{eqnarray*}

Finally, 
$$\lim_{n\rightarrow \infty} \frac{1}{n} 
\ln \left(
\Pr\nolimits_p[\mbox{ $\varphi_\svec(x_{s+1}, \dots, x_n)$ is satisfiable}] 
\right)
\le 
\lim_{n\rightarrow \infty} \frac{1}{n} 
\ln \left(\Ex[Z]\right) \le 
\sup_{b \in [0,1-a]} h(\alpha, a, r, \rho, b),$$
which is the statement of Lemma~\ref{lem:prob}.
\end{proof}
\par\medskip

Substituting the bound of Lemma \ref{lem:prob} into the expression for the
expectation yields
$$\lim_{n\rightarrow \infty} \frac{1}{n}\ln \left(\Ex[Y_a]\right) \le 
\sup_r (f(\alpha, a, r) + \min\{0, \sup_b h(\alpha, a, r, \rho(a,r), b)\}),$$
completing the proof of Lemma~\ref{lem:cores}.

\subsection{Numerical analysis: Steps to the proof of Claim \ref{claim:numerical}}
\label{sec:numerical}

What remains is to verify the numerical claim that for every $\alpha
\in [4.453, 4.506]$, $a\in [1/(4.506e^2), 1]$, and $r>1$ with
$d=\frac{ar}{\alpha} \ln\frac{r}{r-1}\le 1$, it holds that either
$f(\alpha,a,r)<0$ or for every $b\in [0,1-a]$, $f(\alpha, a, r)+
h(\alpha, a,r,0.4a+0.7, b)<0$. We outline the steps towards a 
proof.

There are two stages. First, we identify the $(a,r)$ pairs
such that for every $\alpha \in [4.453, 4.506]$ it holds that
$f(\alpha, a, r)<0$. Second, for the remaining range of values of
$(a,r)$ we show that for every $b\in [0,1-a]$, $f(\alpha, a, r)+
h(\alpha, a,r,0.4a+0.7, b)<0$.

The derivative of $f$ with respect to $r$ is
$$\frac{\partial f}{\partial r}=
a \left( \ln\left(1+\frac{1}{r-1}\right)-\frac{1}{r-1}\right)
\times \ln\left(\frac{3\left(\alpha-ar\ln\left(\frac{r}{r-1}\right)\right)}{2\left(\frac{4}{a^2}-3+a\right)\ln\left(\frac{r}{r-1}\right)}\right).$$
Since the first factor is always negative the derivative is 0 when 
\begin{equation}
\label{eq:rmax}
3\alpha =
\ln\left(\frac{r}{r-1}\right)\left(\frac{8}{a^2}-6+2a+3ar\right).
\end{equation}
This equation has a unique root for every $a\in [1/(4.506e^2),1],
\alpha\in [4.453, 4.506]$ (the derivative of the right-hand side is
negative). Therefore we can conclude that for every $a$ and $\alpha$
in the above range, $f$ is at first increasing with $r$ then
decreasing. Its maximum is achieved at the root of
equation~(\ref{eq:rmax}). Figure \ref{fig:posr} shows the values for
$r$ where $f$ is maximized, with respect to $a$, when $\alpha=4.453$.

\begin{figure}
\begin{center}
\psfig{file=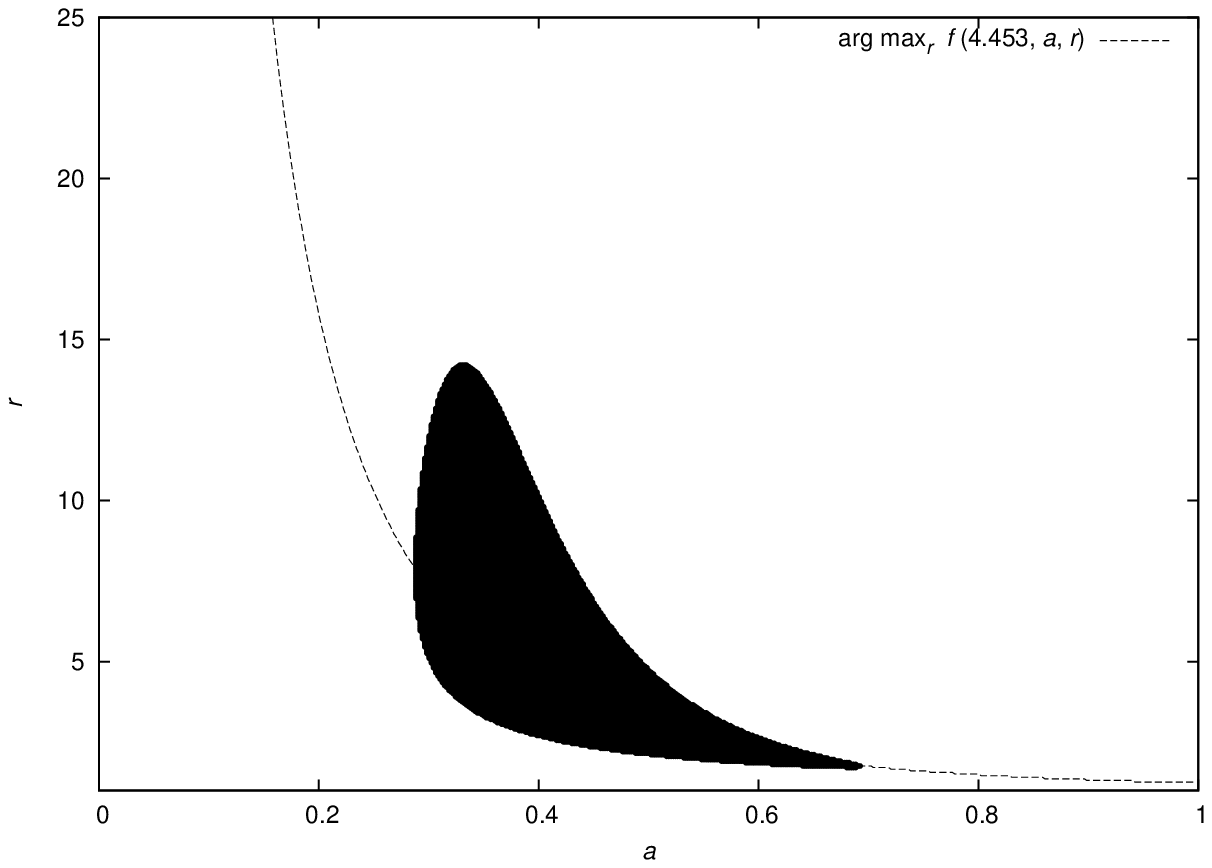,height=3in} 
\end{center}
\caption[]{The pairs $(a,r)$ such that $f(4.453, a, r)>-0.0001$, and the 
value of $r$ that maximizes $f(4.453, a, r)$ for every $a$.}
\label{fig:posr}
\end{figure}

Since the root of~(\ref{eq:rmax}) is monotone decreasing with respect
to $a$ and $\alpha$, we can find values $r_1$ and $r_2$ such that for
every $a$ and $\alpha$ in the range, if $r<r_1$ then $\partial
f/\partial r>0$, and if $r>r_2$ then $\partial f/\partial r<0$. Values
satisfying this condition are $r_1=1.2$ and $r_2=670$. Thus if we show
that $f(\alpha, a, r)$ is negative for $r=r_1$ and $r=r_2$ then it is
negative for any $r$ outside the range $(1.2, 670)$.

We will use the shorthand notation $q:=r\ln\frac{r}{r-1}$. 
For $r\in (1.2, 670)$ we have $q\in (1.0007, 2.16)$. The whole range
satisfies the condition that $d=\frac{a r}{\alpha}r
\ln\frac{r}{r-1}\le1$.

Next we take care of the boundary region with respect to $a$. Notice
that the derivative of $f$ is
negative for $a$ large enough (close to 1), because of the entropy
term in $f$.  The derivative of $f$ with respect to $a$ is
\begin{eqnarray*}
\frac{\partial f}{\partial a} &=& \ln(2) - \ln\frac{a}{1-a}- 
q\ln\frac{3a^2(\alpha-aq)}{2q(4-a^2(3-a))}+3q
-\frac{3a(\alpha-aq)(2-a)}{4-a^2(3-a)}
+q\ln(r)-\ln(r-1).
\end{eqnarray*}

The following observations are helpful for bounding this derivative:
\begin{itemize}
\item $3q+q\ln(r)-\ln(r-1)$ is maximized in the interval $r\in[1.2, 670]$ at 
$r=1.2$.
\item $a^2(\alpha-aq)/(4-a^2(3-a))$ is an increasing function of $a$.
\item $a(\alpha-aq)(2-a)/(4-a^2(3-a))$ is an increasing function of $a$.
\end{itemize}

For $a>0.999$, using the above facts, the derivative is negative. Thus if
we show that $f$ is negative for $a=0.999$, $r\in (1.2, 670)$, $\alpha
\in [4.453, 4.506]$ then $f$ is negative also for every $a>0.999$.
 
We are left with the region $a\in [1/(4.506e^2), 0.999]$, $r\in (1.2,
670)$. In this region all the derivatives of $f$ can be bounded,
and a sufficiently fine grid can be chosen over which to evaluate $f$
in order to identify the grid sections where the function can take
positive values. In particular, the derivative with respect to $a$ is
at most $28.2$, and with respect to $\alpha$ it is at most~$1$. 
Furthermore, since we know that for every $a$ and $\alpha$,
$f$ is maximized as a function of~$r$ at the root of equation~(\ref{eq:rmax}), 
one can find this maximum and the range of~$r$ where
$f(\alpha, a, r)$ is positive using binary search. The points with
$f(4.453, a, r)>-0.0001$ are depicted in Figure~\ref{fig:posr}.

In the second stage we need to analyze $f(\alpha, a, r)+h(\alpha, a,
r, 0.4a+0.7, b)$ for the remaining region of values for $(a,r)$.

First, notice that we can take care of the boundary regions with
respect to $b$ by taking advantage of the entropy term, which has very
large slope for $b$ close to 0, and very steep negative slope for $b$ close to
$1-a$. Specifically, the derivative with respect to $b$ is
\begin{eqnarray*}
\frac{\partial h}{\partial b} &=& \ln(2) - \ln(\rho)-
\ln\left(\frac{b}{1-a-b}\right) 
- \alpha (1-d) \frac{12b-15b^2-30ab+12a-12a^2}{2(4-a^2(3-a))} \\
&&{} + \ln(1-\rho e^{-A})+
3\alpha(1-d)b~ 
\frac{\rho e^{-A}(a+b)}{(1-\rho e^{-A})(4-a^2(3-a))}
\end{eqnarray*}
where $A= \frac{3\alpha(1-d)b(b+2a)}{2(4-a^2(3-a))}$.

Using the bounds on all parameters: $\alpha \in [4.453, 4.506]$, $a\in
(0.28, 0.7)$, $r\in (1.5,14.3)$, and the ones that follow from those:
$d\in (0.06, 0.26)$, $A\in [0, 1.59]$, $\rho \in (0.02, 0.21)$, we can
conclude that $$ -3.04 -\ln\left(\frac{b}{1-a-b}\right) 
< \frac{\partial h}{\partial b}
< 5.38-\ln\left(\frac{b}{1-a-b}\right).$$

The lower bound can be made positive by setting $b/(1-a-b)\le 0.04$ and
the upper bound can be made negative by setting
$b/(1-a-b)\ge220$. Therefore it suffices to show that $h$ is negative for
$b$ in the range $[0.01,0.996(1-a)]$. 

Again, using the bounds for the parameters, all the derivatives of
$f+h$ can be bounded, and a sufficiently fine grid can be chosen over
which to evaluate $f+h$. We did not perform the evaluation on a grid
that is as fine as is required for the rigorous proof because, based on
the current bounds on the derivatives, we would need to evaluate the
function at more than $10^{10}$ points. (As we discussed in the
introduction, we feel that the computational effort is not yet
justified.)

For illustration, Figure~\ref{fig:expcovers} shows the estimated
values of $\sup_r f(\alpha, a, r)$ and of $\sup_{r,b}(f(\alpha,a,r)$
$+$ $h(\alpha, a, r, \rho(a), b))$ for $\alpha=4.453$, $4.470$,
$4.490$, and $4.506$. The maximum is taken over evaluations at a grid
of step size $0.001$ for the parameters $r$ and $b$.  We estimated
with better precision the location of the maximum of $f+h$ by
evaluating the function at a fine grid in the region where the
evaluations on the coarse grid give the largest values. The maximum
found in this way is at $\alpha=4.453$, $a=0.62566$, $b=0.03568$ and
$r=2.00134$ ($d=0.19473$). The value of $f+h$ at this point is
$-0.000058$.

\begin{figure}
\begin{center}
\psfig{file=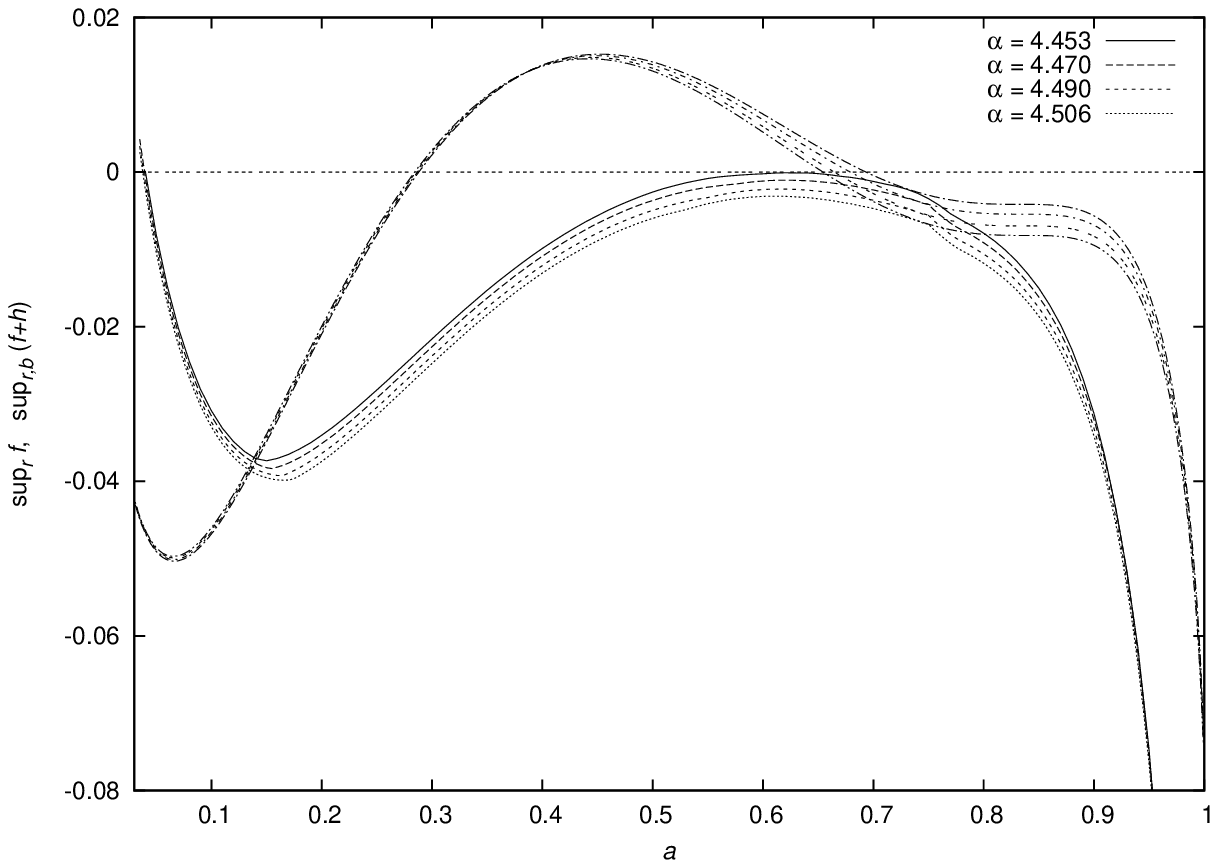,height=3in} \\
\end{center}
\caption[]{
The values of
$\sup_r f(\alpha, a, r)$ and 
$\sup_{r,b} (f(\alpha, a,r)+ h(\alpha, a, r, 0.4 a+0.7, b))$
for  $\alpha=4.453$, $4.470$, $4.490$, and $4.506$.}
\label{fig:expcovers}
\end{figure}

\comment{
\begin {figure}
      \begin{center}
        \input{exactcoredoubleb.tex}
      \end{center}
\caption{
The estimated value of 
$\sup_{r,b} (f(\alpha, a,r)+ h(\alpha, a, r, 0.4 a+0.7, b))$
for  $\alpha=4.453$, $4.47$, $4.49$, and $4.506$.}
\label{fig:expcovers}
\end {figure}
}

\section{Concluding remarks}

As mentioned in the introduction, all of the known rigorous upper
bounds for the satisfiability threshold of 3-\sat\ are based on the
first moment method
\cite{KMPS95,DB97,KKKS98,JSV00,KKSVZ07,DBM00}; the corresponding 
upper bounds in this sequence of results are: 4.758, 4.64, 4.601,
4.596, 4.571, 4.506. The general method is to consider a random
variable $Z$ that is equal to the number of satisfying assignments of
a particular kind. These satisfying assignments are such that at least
one exists if the formula is satisfiable.  Showing that above a certain
density $\Ex[Z] \rightarrow 0$ thus implies (by Markov's inequality) that
$\Pr[Z=0]\to 1$, and consequently the probability that the formula
is unsatisfiable also goes to~1. For example, \cite{DB97}~takes $Z$ to
be the number of {\em negatively prime solutions}, i.e.,
solutions for which every variable assigned~1 is constrained.

Here, we have used the same idea but with $Z$ taken to be the total weight of
partial assignments under the weight function~(\ref{eq:weight}) inspired
by Survey Propagation.  Given the dramatic success of Survey Propagation
for random 3-\sat, it seems plausible that this approach can potentially
yield rather tight upper bounds on the threshold.  We were able to achieve
only partial progress in this direction, but we are quite hopeful that
extensions of our approach could lead to further progress.

\comment{
One natural extension would be to consider the total
weight of partial assignments that can be reached from a satisfying
assignment, or the total weight of partial assignments that can be
reached from positively prime solutions, etc. The intuition is that
this idea may allow us to avoid the overcounting due to formulas with
many solutions in a single cluster, by merging a large part of the
weight of such solutions. 
} 

One natural extension would be to use a different weight $\rho$
for each variable, depending for example on the number of positive and
negative occurrences of the variable. The corresponding generalization of
Theorem~\ref{thm:sum} is proved in
\cite{AM07}. It is quite possible that the value of $\Ex[Z]$ in this case is
significantly smaller.

\section{Acknowledgments}

We would like to thank the anonymous referees for their detailed and very
useful suggestions.

\bibliographystyle{plain}
\bibliography{firstmoment}

\begin{thebibliography}{10}

\bibitem{AR06}
D.~Achlioptas and F.~Ricci-Tersenghi.
\newblock On the solution-space geometry of random constraint satisfaction
  problems.
\newblock In {\em Proc. $38^{th}$ ACM Symp. Theory of Computing (STOC)}, pages
  130--139, 2006.

\bibitem{AM07}
F.~Ardila and E.~Maneva.
\newblock Pruning processes and a new characterization of convex geometries.
\newblock {\em Discrete Mathematics}, 2008.
\newblock To appear.

\bibitem{BZ04}
A.~Braunstein and R.~Zecchina.
\newblock Survey propagation as local equilibrium equations.
\newblock {\em J. Stat. Mech. : Theory and Experiments (JSTAT)}, page 06007,
  2004.

\bibitem{Chv91}
V.~Chvatal.
\newblock Almost all graphs with 1.44 edges are 3-colourable.
\newblock {\em Random Struct, Algorithms}, 2:11--28, 1991.

\bibitem{DB97}
O.~Dubois and Y.~Boufkhad.
\newblock A general upper bound for the satisfiability threshold of random
  $r$-{SAT} formulae.
\newblock {\em J. Algorithms}, 24:395--420, 1997.

\bibitem{DBM00}
O.~Dubois, Y.~Boufkhad, and J.~Mandler.
\newblock Typical random 3-{SAT} formulae and the satisfiability threshold.
\newblock In {\em Proc. $11^{th}$ ACM-SIAM Symp. on Discrete Algorithms
  (SODA)}, pages 126--127, 2000.
\newblock Extended version at arXiv:cs/0211036.

\bibitem{Friedgut99}
E.~Friedgut.
\newblock Neccesary and sufficient conditions for sharp threhsolds of graph
  properties and the $k$-problem.
\newblock {\em J. Amer. Math. Soc.}, 12:1017--1054, 1999.

\bibitem{HS03}
M.~Hajiaghayi and G.~Sorkin.
\newblock The satisfiability threshold of random 3-{SAT} is at least 3.52.
\newblock Technical report, 2003.
\newblock Preprint at arXiv:math/0310193.

\bibitem{JSV00}
S.~Janson, Y.~C. Stamatiou, and M.~Vamvakari.
\newblock Bounding the unsatisfiability threshold of random 3-{SAT}.
\newblock {\em Random Struct. Algorithms}, 17(2):103--116, 2000.

\bibitem{KMPS95}
A.~Kamath, R.~Motwani, K.~V. Palem, and P.~G. Spirakis.
\newblock Tail bounds for occupancy and the satisfiability threshold
  conjecture.
\newblock {\em Random Struct. Algorithms}, 7(1):59--80, 1995.

\bibitem{KKL06}
A.~C. Kaporis, L.~M. Kirousis, and E.~G. Lalas.
\newblock The probabilistic analysis of a greedy satisfiability algorithm.
\newblock {\em Random Struct. Algorithms}, 28(4):444--480, 2006.

\bibitem{KKSVZ07}
A.~C. Kaporis, L.~M. Kirousis, Y.~C. Stamatiou, M.~Vamvakari, and M.~Zito.
\newblock The unsatisfiability threshold revisited.
\newblock {\em Discrete Applied Mathematics}, 155(12):1525--1538, 2007.

\bibitem{KKKS98}
L.~M. Kirousis, E.~Kranakis, D.~Krizanc, and Y.~C. Stamatiou.
\newblock Approximating the unsatisfiability threshold of random formulas.
\newblock {\em Random Struct. Algorithms}, 12(3):253--269, 1998.

\bibitem{MMW07}
E.~Maneva, E.~Mossel, and M.~J. Wainwright.
\newblock A new perspective on survey propagation.
\newblock {\em J. ACM}, 54(4):2--41, 2007.

\bibitem{MMZ05}
M.~M\'{e}zard, T.~Mora, and R.~Zecchina.
\newblock Clustering of solutions in the random satisfiability problem.
\newblock {\em Phys. Rev. Lett.}, 94(197205), 2005.

\bibitem{MPV87}
M.~M\'{e}zard, G.~Parisi, and M.~A. Virasoro.
\newblock {\em Spin Glass Theory and Beyond}.
\newblock World Scientific, Singapore, 1987.

\bibitem{MPZ02}
M.~M\'{e}zard, G.~Parisi, and R.~Zecchina.
\newblock Analytic and algorithmic solution of random satisfiability problems.
\newblock {\em Science}, 297, 812, 2002.
\newblock (Scienceexpress published online June 27, 2002;
  10.1126/science.1073287).

\bibitem{MZ02}
M.~M\'{e}zard and R.~Zecchina.
\newblock Random k-satisfiability: from an analytic solution to an efficient
  algorithm.
\newblock {\em Phys. Rev. E}, 66, 2002.

\bibitem{Pearl88}
J.~Pearl.
\newblock {\em Probabilistic reasoning in intelligent systems: networks of
  plausible inference}.
\newblock Morgan Kaufmann, Palo Alto, CA, 1988.

\end{thebibliography}

\end{document}